%% file: main.tex
\documentclass[lettersize,journal]{IEEEtran}

\input{preamble}

\begin{document}
\bstctlcite{IEEEexample:BSTcontrol} 

\title{Fast Tri-Hybrid Beamforming via Deep Unfolding
}

\author{Pinjun~Zheng, Md.~Jahangir Hossain, and Anas~Chaaban \vspace{-1em}

\thanks{The authors are with the School of Engineering, The University of British Columbia, Kelowna, BC V1V 1V7, Canada (e-mail: pinjun.zheng@ubc.ca; jahangir.hossain@ubc.ca; anas.chaaban@ubc.ca).

} 
}

\maketitle

\begin{abstract}
Tri-hybrid multiple-input multiple-output architectures have recently emerged as a promising enabler for next-generation wireless systems, as they potentially provide enhanced design flexibility without a proportional increase in hardware cost or power consumption. However, the resulting triple-domain coupling renders beamforming optimization challenging and computationally demanding. This paper develops a fast tri-hybrid beamforming framework for multiuser downlink systems employing dynamic metasurface antennas (DMAs). Based on the equivalence between weighted sum-rate maximization and weighted sum-minimum mean square error minimization, an iterative algorithm is first derived under per-DMA input power constraints, with all update equations available in closed form, but convergence inherently remains slow. To enable real-time operation, the algorithm is further unfolded into a trainable finite-iteration architecture using graph neural networks that ensure permutation equivariance and support varying numbers of users. Trained on ray-tracing channel data, the unfolded method achieves comparable or higher system sum-rates while reducing runtime by more than an order of magnitude. The method also demonstrates strong scalability, robustness, and generalization across various environments.
\end{abstract}
\begin{IEEEkeywords}
Dynamic metasurface antenna, deep learning, deep unfolding, deep unrolling, graph neural network, GraphSAGE, multiuser beamforming, tri-hybrid MIMO.
\end{IEEEkeywords}

\section{Introduction}

Tri-hybrid \ac{MIMO} has recently emerged as a novel transceiver architecture with the potential to serve as a cornerstone of next-generation \ac{MIMO} communication systems~\cite{Heath2025Tri,Castellanos2025Embracing}. It extends the well-known hybrid digital-analog \ac{MIMO} architecture~\cite{Alkhateeb2014MIMO,Sohrabi2016Hybrid,Ioushua2019Family} by incorporating a third layer consisting of a reconfigurable antenna array or tunable \ac{RF} front end, commonly referred to as the \emph{antenna domain} or \emph{\ac{EM} domain}~\cite{Ying2024Reconfigurable,Liu2025Reconfigurable,Zheng2025Reconfigurable}. The main advantage of such a tri-hybrid architecture lies in providing additional antenna-domain control without proportionally increasing \ac{RF} complexity or power consumption~\cite{Heath2025Tri,Castellanos2023Energy,Li2026Tri}. This offers enhanced design flexibility for meeting the stringent performance requirements of next-generation wireless networks.

The antenna domain reconfigurability can be implemented in various ways, depending on the employed antenna technologies. Representative examples include \acp{DMA}\cite{PhysRevApplied}, pixel antennas\cite{Shen2017Successive}, lens antennas~\cite{Gao2019Wideband}, and fluid antennas~\cite{Wang2026Electromagnetically}.  These technologies enable dynamic adaptation of antenna characteristics, such as radiation pattern, polarization, operating frequency, and other \ac{EM} properties~\cite{Costantine2015Reconfigurable}. Although these technologies have been extensively studied in the antenna community, their integration into tri-hybrid \ac{MIMO} remains largely underexplored. {Notably, recent studies have shown that tri-hybrid MIMO architectures employing \acp{DMA} can achieve a more favorable trade-off between spectral and energy efficiency by jointly leveraging the distinct hardware characteristics of the digital, analog, and antenna domains~\cite{Castellanos2023Energy, Castellanos2025Embracing}.} An early study also demonstrates enhanced hardware efficiency when pattern-reconfigurable antennas are incorporated~\cite{Zheng2025Trihybrid}. Moreover, the use of multifunctional compound reconfigurable antennas in tri-hybrid systems is under active investigation~\cite{Liu2025Secure}. {Indeed, additional reconfigurability had already been explored in conventional hybrid \ac{MIMO} architectures before the advent of tri-hybrid \ac{MIMO}, through mechanisms such as dynamic \ac{RF}-chain selection and dynamic subarray connections designed to improve hardware efficiency and beamforming performance~\cite{Kaushik2019Dynamic,Park2017Dynamic,Li2020Dynamic}. However, these mechanisms primarily reconfigure the analog-domain feed network, whereas tri-hybrid \ac{MIMO} introduces an additional and distinct antenna-domain processing layer that can be configured independently while operating complementarily with the digital and analog domains.}

Despite these advantages, tri-hybrid architecture also complicates the beamforming design, as it involves the joint optimization of digital, analog, and \ac{EM} precoders under practical hardware constraints, for which no simple solution exists. Several studies have sought to address the tri-hybrid beamforming problem, the vast majority of which adopt alternating optimization frameworks that iteratively update the precoders across domains by combining various optimization techniques. For example, the authors of~\cite{Li2026Tri} proposed an alternating optimization method that integrates penalty dual decomposition and coordinate descent methods for tri-hybrid beamforming based on radiation-center reconfigurable antennas. In~\cite{Chen2026Tri}, a solution for pinching-antenna-assisted multiuser downlink was developed by combining \ac{ZF} beamforming with probabilistic learning. The work in~\cite{Zhang2026Trihybrid} addressed tri-hybrid beamforming for integrated sensing and communication using an alternating scheme based on sequential quadratic programming and adaptive gradient projection. While the technical details of these approaches vary widely due to differing optimization objectives and system models, they all rely on the coordinated use of multiple optimization techniques. This complexity is inherent to tri-hybrid beamforming, arising from the coupled design variables and constraints introduced by the multi-domain architecture.

In theoretical studies of beamforming algorithms for such complex systems, computational efficiency is easily overlooked when primary attention is placed on asymptotic convergence performance. In practice, however, beamforming in \ac{MIMO} communications is inherently latency-sensitive, as the channel coherence time is typically limited and real-time reconfiguration is required~\cite{Wiesmayr2025Design}. This challenge is particularly pronounced in tri-hybrid beamforming, where most algorithms rely on numerous iterations or alternating updates across the three precoding domains, which can significantly slow down convergence. In fact, a few existing works in tri-hybrid beamforming have already emphasized computational efficiency. For instance, the algorithms in~\cite{Fang2026Trihybrid} and~\cite{Chen2025Integrated} reduce per-iteration complexity by designing closed-form updates. Nonetheless, low-complexity updates alone do not fully resolve the computational burden, as the overall cost also depends on convergence speed. Some low-complexity methods require hundreds of iterations to converge, as reported in~\cite{Chen2025Integrated, Chen2026Tri}, which can still lead to substantial runtime in practice.

To fill this gap, \emph{deep unfolding} (or \emph{unrolling}) provides a principled approach to accelerating iterative algorithms by embedding trainable components into the iterative procedure~\cite{Monga2021Algorithm}. While it does not simplify individual updates, the added trainable components can enhance the update rules and achieve improved performance within fewer iterations after proper training. Deep unfolding has been successfully applied to a range of related problems, such as interference network power allocation~\cite{Chowdhury2021Unfolding}, joint radar-communication beamforming~\cite{Zhang2025Low}, and more~\cite{Khobahi2021LoRD, Jagannath2021Redefining}. Motivated by these advances, this paper advocates deep unfolding as an effective means of accelerating tri-hybrid beamforming without compromising performance. We focus on a tri-hybrid beamforming problem based on \acp{DMA} in a multiuser downlink scenario. Targeting real-time communications, our \emph{goal} is to develop an algorithm that provides a high-quality tri-hybrid precoding design rapidly. 

The proposed solution consists of two main steps: we first \emph{(i)} develop a low-complexity (but potentially slow-converging) iterative algorithm, and then \emph{(ii)} design a deep unfolding network to accelerate its convergence. Although this paper focuses on a specific antenna type, the presented paradigm can be readily adapted to tri-hybrid beamforming problems under different system configurations. Specifically, our contributions are summarized as follows:  
\begin{itemize}
	\item We develop an iterative algorithm for the downlink sum-rate maximization problem in a multiuser tri-hybrid \ac{MIMO} system employing \acp{DMA}. The algorithm is derived based on the \ac{WMMSE} equivalence principle combined with a few alternating optimization steps. Each update equation is provided in closed-form to ensure low per-iteration complexity, although the overall runtime remains considerable due to slow convergence caused by the extensive alternating updates.
	
	\item We propose a data-driven method based on deep unfolding to accelerate convergence. Specifically, we introduce a set of trainable components into the iterative procedure, with parameters learned from user \ac{CSI} features via a deep neural network. In particular, an architecture based on the \ac{GraphSAGE} algorithm is developed to inductively learn the acceleration parameters from user \ac{CSI}, maintaining permutation equivariance and naturally accommodating a varying number of users.
	
	\item We train the proposed deep learning model using ray-tracing channel data generated for multiple cities and evaluate it on data collected from a different city. To ensure effective training, we design problem-specific data preprocessing and parameter initialization strategies. The proposed method is comprehensively evaluated in terms of acceleration performance, scalability, robustness to \ac{CSI} errors, and generalization capability. The training code and our trained network weights are available in~\cite{ZPinjun_FastTHBF_2026}.
\end{itemize}

The rest of the paper is organized as follows. Section~II introduces the system model and formulates the tri-hybrid beamforming problem. Section~III develops an iterative algorithm to solve the formulated problem, which, while effective, inevitably suffers from slow convergence. To address this limitation, Section~IV proposes a deep unfolding approach to accelerate the algorithm. Section~V presents numerical results. Finally, Section~VI discusses the limitations of the proposed method, and Section~VII concludes the paper.

We use the following notation throughout the paper. Non-bold lowercase and uppercase letters (e.g., $a, A$) denote scalars, bold lowercase letters (e.g., $\av$) denote vectors, and bold uppercase letters (e.g., $\Am$) denote matrices. We use $[\av]_{i}$ to denote the~$i^\text{th}$ entry of vector $\av$, and use $[\Am]_{i,j}$ to denote the entry at the~$i^\text{th}$ row and~$j^\text{th}$ column of the matrix~$\Am$. The superscripts ${(\cdot)}^\TT$, ${(\cdot)}^\HH$, and ${(\cdot)}^{-1}$ represent the transpose, Hermitian (conjugate transpose), and inverse operators, respectively. In addition, $\mathsf{j}$ denotes the imaginary unit with $\mathsf{j}^2 = -1$. The operators $\mathrm{Re}(\cdot)$ and $\mathrm{Im}(\cdot)$ extract the real and imaginary parts of a complex quantity, respectively, and $\mathrm{vec}(\cdot)$ denotes the vectorization operator that stacks the columns of a matrix into a single column vector. The operator $\mathrm{diag}(\cdot)$ constructs a diagonal matrix from its argument, while $\mathrm{blkdiag}(\cdot)$ constructs a block-diagonal matrix from its arguments.

\section{System Model}\label{sec:SACM}

\begin{figure*}[t]
  \centering
  \includegraphics[width=0.9\linewidth]{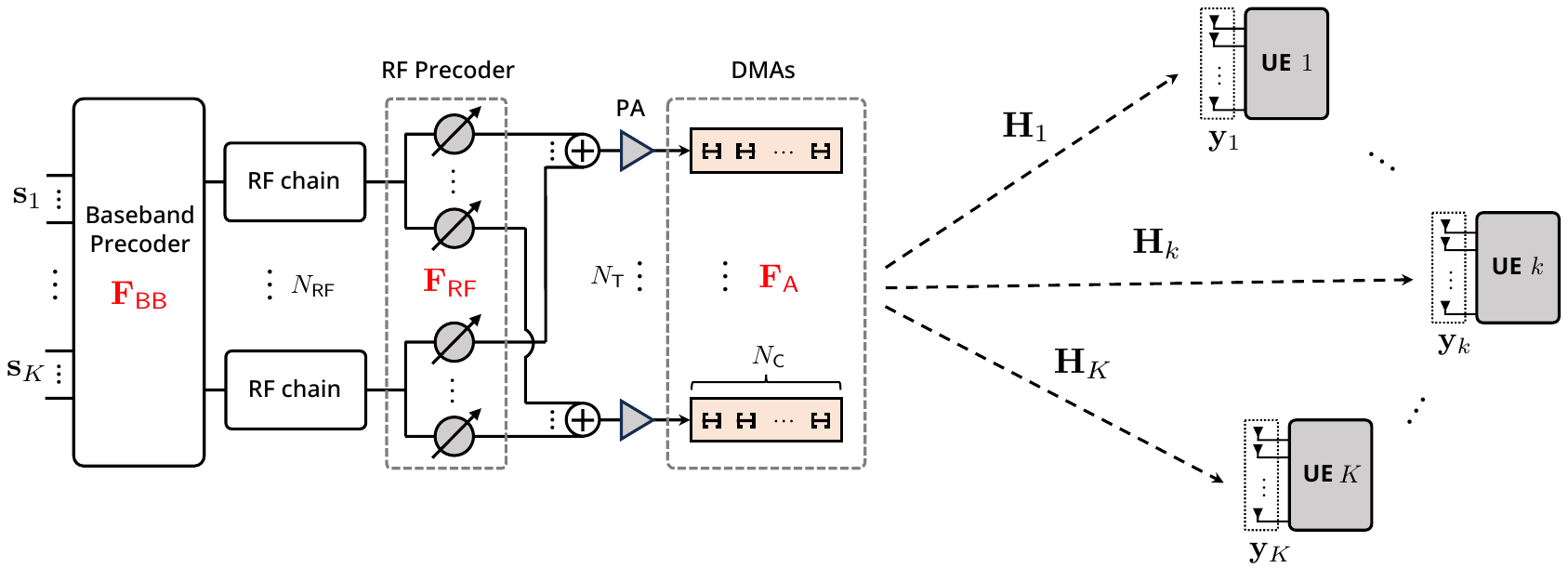}
  \vspace{-0.5em}
\caption{Schematic of a tri-hybrid MIMO downlink system with a BS equipped with $N_\mathsf{RF}$ RF chains and $N_\mathsf{T}$ DMAs serving $K$ users.}
  \label{fig_system}
\end{figure*}

We consider a multiuser tri-hybrid \ac{MIMO} downlink system, whose architecture is illustrated in Fig.~\ref{fig_system}. The transmitter is a \ac{BS} equipped with $N_\mathsf{RF}$ \ac{RF} chains and $N_\Tt$ \acp{DMA}. Each \ac{DMA} is a tunable leaky-wave antenna composed of $N_\Ct$ {quasi-passive} radiating elements, whose resonant responses can be adjusted through semiconductor tuning components, such as varactors, thereby changing the amplitude and phase of the radiated signals~\cite{PhysRevApplied}. As a result, the \ac{BS} comprises a total of $N_\Ct N_\Tt$ effective radiating elements. The \ac{BS} serves $K$ users simultaneously. The $k^\text{th}$ user is equipped with $M_k$ antennas and is allocated $N_{\St,k}$ data streams. Accordingly, the total number of transmitted downlink data streams is $N_\St = \sum_{k=1}^K N_{\St,k}$. Typically, the system dimensions satisfy $N_\St \leq N_\mathsf{RF} \leq N_\Tt$. 

\subsection{Tri-Hybrid Precoding Architecture}

Let $\sv_k\in\mathbb{C}^{N_{\St,k}}$ denote the data symbol vector intended for the $k^\text{th}$ user, and define the aggregated symbol vector as $\sv=[\sv_1^\TT,\dots,\sv_K^\TT]^\TT\in\mathbb{C}^{N_\St}$. We assume $\mathbb{E}[\sv\sv^\HH]=\mathbf{I}$. The \ac{BS} employs a three-stage precoding architecture. Specifically, the data symbols are first processed by a baseband digital precoder $\Fm_\mathsf{BB}\in\mathbb{C}^{N_\mathsf{RF}\times N_\St}$, followed by an \ac{RF}-domain analog precoder $\Fm_\mathsf{RF}\in\mathbb{C}^{N_\Tt\times N_\mathsf{RF}}$, and finally by an antenna-domain precoder $\Fm_\mathsf{A}\in\mathbb{C}^{N_\Ct N_\Tt\times N_\Tt}$ implemented by the \acp{DMA}. The resulting transmitted signal is given by $\Fm_\mathsf{A}\Fm_\mathsf{RF}\Fm_\mathsf{BB}\sv$. Assuming a \emph{narrowband} communication model, the characteristics of these three precoders are detailed below.

\subsubsection{Fully Connected RF Precoding}
We consider a fully connected RF precoding architecture, where each RF chain is connected to all \acp{DMA}. Consequently, $\Fm_\mathsf{RF}$ is a full matrix. Since the RF precoder is implemented using phase shifters, its entries are constrained to constant modulus. Without loss of generality, we adopt the normalization
$|[\Fm_\mathsf{RF}]_{i,j}|^2 = {1}/{N_\Tt},\ \forall i,j$, while the digital precoder $\Fm_\mathsf{BB}$ is unconstrained.

\subsubsection{Per-DMA Input Power Constraints}
{Since the \acp{DMA} are assumed to be quasi-passive~\cite{PhysRevApplied}, they provide no additional power gain. Accordingly, the power constraint is imposed only at their inputs.}
As illustrated in Fig.~\ref{fig_system}, the signals processed by the \ac{RF} precoder are amplified by a dedicated \ac{PA} before being fed into each \ac{DMA}. Since each \ac{PA} operates independently and has a limited linear operating range and maximum output capability, the input power to each \ac{DMA} must be individually constrained. Accordingly, we impose a \emph{per-DMA input power constraint} as
\begin{equation}
	\big[\Fm_\mathsf{RF} \Fm_{\mathsf{BB}} \Fm_{\mathsf{BB}}^\HH \Fm_\mathsf{RF}^\HH \big]_{n,n}
	\leq P_n,\ n = 1,2,\dots,N_\Tt.	
\end{equation}

\subsubsection{DMA Precoding Model} 
{Each \ac{DMA} is a leaky-wave antenna that successively radiates the guided \ac{EM} wave through its $N_\Ct$ tunable elements.  Across the \ac{DMA} array, however, the $N_\Tt$ \acp{DMA} are excited concurrently, as the digital- and analog-domain signal paths operate in parallel.} Mathematically, the antenna-domain precoding performed by the $n^\text{th}$ \ac{DMA} can be represented by applying a weight vector $\wv_n\in\mathbb{C}^{N_\Ct}$ to its input signal. As a result, the antenna-domain precoder admits the block-diagonal structure $\Fm_\mathsf{A} = \mathrm{blkdiag}(\wv_1,\wv_2,\dots,\wv_{N_\Tt})$. The weight vector $\wv_n$ is determined by the physical characteristics of the DMA. Specifically, it can be expressed as
\begin{equation}
	\wv_n = [\eta_{n,1}\alpha_{n,1},\dots,\eta_{n,N_\Ct}\alpha_{n,N_\Ct}]^\TT,
\end{equation}
where $\eta_{n,m}$ denotes the complex waveguide response from the feed point to the $m^\text{th}$ element, and $\alpha_{n,m}$ represents the tunable response of that element. Generally, $\eta_{n,m}$ can be obtained via full-wave simulations for a given hardware implementation~\cite{Carlson2024Hierarchical}, and several analytical models have also been developed to characterize it~\cite{Zhang2022Beam,Williams2022Electromagnetic,Kimaryo2023Downlink,Bayraktar2024near}. The tunable element response $\alpha_{n,m}$ can be modeled using the Lorentzian resonator model~\cite{PhysRevApplied,Carlson2024Hierarchical}, under which $\alpha_{n,m} = \frac{\mathsf{j}+e^{\mathsf{j}\phi_{n,m}}}{2}$, where $\phi_{n,m} \in [0, 2\pi)$ denotes the controllable phase shift.

Defining $\etav_n \triangleq [\eta_{n,1}, \eta_{n,2}, \dots, \eta_{n,N_\Ct}]^\TT$, $\Gm_n\triangleq\mathrm{diag}(\etav_n)$, and $\uv_n \triangleq [e^{\mathsf{j}\phi_{n,1}}, e^{\mathsf{j}\phi_{n,2}}, \dots, e^{\mathsf{j}\phi_{n,N_\Ct}}]^\TT$, the \ac{DMA} weight vector $\wv_n$ can be written in a compact form as~\cite{Kimaryo2023Downlink,Bayraktar2024near}
\begin{equation}\label{wvn}
	\wv_n = \frac{1}{2}\Gm_n\uv_n + \frac{\mathsf{j}}{2}\etav_n,\ n=1,2,\dots,N_\Tt.
\end{equation}
Hence, $\{\uv_n\}_{n=1}^{N_\Tt}$ represents the adjustable components in $\Fm_\At$. 
{For simplicity, the mutual coupling among the \ac{DMA} elements, which may distort the designed antenna-domain precoder when sufficiently strong, is neglected in this paper. Related studies on \ac{DMA} mutual coupling can be found in, e.g.,~\cite{Williams2022Electromagnetic,Prodhomme2024Mutual,Prodhomme2025Benefits}. Although often viewed as an impairment, mutual coupling can also be exploited to improve system performance when properly modeled~\cite{Prodhomme2025Benefits,Zheng2026MutualC}. Incorporating this effect into the proposed framework is left for future work.}

{Note that, although both $\Fm_\mathsf{RF}$ and $\Fm_\mathsf{A}$ provide beamforming capabilities, they play complementary rather than redundant roles. Specifically, $\Fm_\mathsf{RF}$ maps the outputs of a limited number of \ac{RF} chains to the inputs of the reconfigurable front end, whereas $\Fm_\mathsf{A}$ further shapes the resulting radiation through the reconfigurable antenna-domain responses. The hybrid use of both beamformers offers the following two benefits:
\begin{itemize}
	\item \emph{Balancing spectral and energy efficiency.} The \ac{RF}-domain precoder $\Fm_\mathsf{RF}$ is realized using \ac{RF} phase shifters, which provide direct phase control and strong beamforming capability at the expense of high power consumption (typically tens of $\mathrm{mW}$ per element). In contrast, the antenna-domain precoder $\Fm_\mathsf{A}$ is controlled by low-power semiconductor tuning components, such as varactors (typically $\mathrm{\mu W}$ or less per element), while the Lorentzian-constrained response limits beamforming flexibility. Their hybrid use therefore achieves a favorable balance between spectral and energy efficiency; see, e.g.,~\cite[Fig.~5]{Heath2025Tri}.
    \item \emph{Enhancing scalability.} Unlike conventional \ac{DMA}-based communication systems, where each \ac{DMA} typically requires a dedicated \ac{RF} chain~\cite{Shlezinger2019Dynamic,Zhang2022Beam,Kimaryo2023Downlink}, the analog precoder $\Fm_\mathsf{RF}$ maps signals from $N_\mathsf{RF}$ \ac{RF} chains to $N_\mathsf{T}$ \acp{DMA}. This architecture improves scalability by allowing more \acp{DMA} to be deployed without proportionally increasing the number of \ac{RF} chains.
\end{itemize}
}

\subsection{Signal Transmission and Performance Metric}
Consider a narrowband downlink signal model. Let $\Hm_k\in\mathbb{C}^{M_k\times N_\Ct N_\Tt}$ denote the downlink channel matrix from the \ac{BS} to the $k^\text{th}$ user. Under the tri-hybrid precoding architecture, the received signal at the $k^\text{th}$ user can be written as
\begin{align}\label{eq:yk}
	\yv_k = \Hm_k \Fm_\At \Fm_\mathsf{RF}\Fm_{\mathsf{BB},k}\sv_k +\! \sum_{i\neq k} \Hm_k\Fm_\At \Fm_\mathsf{RF}\Fm_{\mathsf{BB},i}\sv_i + \nv_k,
\end{align}
where $\Fm_{\mathsf{BB},k}\in\mathbb{C}^{N_\mathsf{RF}\times N_{\St,k}}$ is the digital precoder corresponding to $\sv_k$ such that $[\Fm_{\mathsf{BB},1},\Fm_{\mathsf{BB},2},\dots,\Fm_{\mathsf{BB},K}]=\Fm_\mathsf{BB}$. The first term in \eqref{eq:yk} is the desired signal, the second term represents the multiuser interference, and $\nv_k\sim\mathcal{CN}(\mathbf{0},\sigma_k^2\mathbf{I}_{M_k})$ models the additive white Gaussian noise at the $k^\text{th}$ user.

Assuming Gaussian signaling, the achievable spectral efficiency of the $k^\text{th}$ user is given by~\cite{Christensen2008Weighted, Sohrabi2016Hybrid}
\begin{multline}\label{eq:Rk}
	R_k = \log_2 \det \bigg(\mathbf{I}_{M_k}+\Hm_k \Fm_\mathsf{A}\Fm_\mathsf{RF}\Fm_{\mathsf{BB},k}\Fm_{\mathsf{BB},k}^\HH\Fm_\mathsf{RF}^\HH\Fm_\mathsf{A}^\HH\Hm_k^\HH\\
	\times\Big(\sum_{i\neq k}\Hm_k\Fm_\mathsf{A}\Fm_\mathsf{RF}\Fm_{\mathsf{BB},i}\Fm_{\mathsf{BB},i}^\HH\Fm_\mathsf{RF}^\HH\Fm_\mathsf{A}^\HH\Hm_k^\HH+\sigma_k^2\mathbf{I}_{M_k}\Big)^{-1}\bigg).
\end{multline}
To characterize the overall system performance, we adopt the \ac{WSR} defined as
\begin{equation}\label{eq:defR_new}
R = \sum_{k=1}^K \beta_k R_k,
\end{equation}
where $\beta_k> 0$ denotes the priority weight for the $k^\text{th}$ user.

\subsection{Problem Formulation}\label{sec:PF}
The goal of tri-hybrid beamforming is to jointly design the antenna-domain, RF-domain, and baseband precoders so as to maximize the system \ac{WSR}. This leads to the following optimization problem:
\begin{equation}\label{P0}
	\begin{aligned}
		\max_{\Fm_\mathsf{A},\Fm_\mathsf{RF},\Fm_\mathsf{BB}} \ & R \\
		\mathrm{s.t.} \ \quad & \big[\Fm_\mathsf{RF} \Fm_{\mathsf{BB}} \Fm_{\mathsf{BB}}^\HH \Fm_\mathsf{RF}^\HH \big]_{n,n} \leq P_n,\ \forall n,\\
		& |[\Fm_\mathsf{RF}]_{i,j}|^2 = 1/N_\Tt, \ \forall i,j,\\
		& |[\uv_n]_m|^2 = 1,\ \forall n,m.
	\end{aligned}
\end{equation}
{Different from existing studies on \ac{DMA} beamforming~\cite{Kimaryo2023Downlink,Zhang2022Beam}, fully connected \ac{DMA} architectures~\cite{Boateng2026Beamforming}, and single-user \ac{DMA}-based tri-hybrid systems~\cite{Castellanos2023Energy}, Problem~\eqref{P0} jointly accounts for multiuser \ac{MIMO} precoding, baseband and \ac{RF} processing, Lorentzian-constrained \ac{DMA} weights, and per-\ac{DMA} input power constraints. To the best of the authors’ knowledge, this particular problem formulation has not previously been investigated in the literature.} 
This problem is highly nonconvex due to the coupled optimization variables and the constant-modulus constraints imposed by the RF and antenna-domain hardware. 

To render the problem more tractable, we adopt a two-stage strategy commonly used in large-scale hybrid beamforming. Specifically, we first define an intermediate virtual fully digital precoder $\Fm_\mathsf{D} = \Fm_\mathsf{RF}\Fm_\mathsf{BB} \in \mathbb{C}^{N_\Tt \times N_\St}$ such that $\big[\Fm_\mathsf{D} \Fm_{\mathsf{D}}^\HH \big]_{n,n} \leq P_n,\ \forall n$. This allows us to relax~\eqref{P0} as
\begin{equation}\label{P1} \tag{$\mathcal{P}$1}
	\begin{aligned}
		\max_{\Fm_\mathsf{A},\Fm_\mathsf{D}} \quad & R \\
		\mathrm{s.t.} \ \quad & \big[\Fm_\mathsf{D} \Fm_{\mathsf{D}}^\HH \big]_{n,n} \leq P_n,\ \forall n,\\
		& |[\uv_n]_m|^2 = 1,\ \forall n,m.
	\end{aligned}
\end{equation}
Compared with \eqref{P0}, Problem~\ref{P1} reduces an optimization variable and removes the RF constant-modulus constraints, therefore admitting a simpler structure, while still respecting the per-DMA input power constraint and the physical constraints of the DMAs. After obtaining a solution $\{\Fm_\mathsf{A}^\mathsf{opt}, \Fm_\mathsf{D}^\mathsf{opt}\}$ to Problem~\ref{P1}, we recover the hybrid precoders by decomposing the fully digital precoder into the RF and baseband components. This leads to the following second-stage problem:
\begin{equation}\label{P2}\tag{$\mathcal{P}$2}
	\begin{aligned}
		\min_{\Fm_\mathsf{RF},\Fm_\mathsf{BB}} \quad & \|\Fm_\mathsf{D}^\mathsf{opt} - \Fm_\mathsf{RF}\Fm_\mathsf{BB}\|_\mathsf{F}^2\\
		\mathrm{s.t.} \ \ \quad & |[\Fm_\mathsf{RF}]_{i,j}|^2 = 1/N_\Tt, \ \forall i,j,\\
		& \big[\Fm_\mathsf{RF} \Fm_{\mathsf{BB}} \Fm_{\mathsf{BB}}^\HH \Fm_\mathsf{RF}^\HH \big]_{n,n} \leq P_n,\ \forall n.
	\end{aligned}	
\end{equation}

A large body of prior work has demonstrated that designing an unconstrained fully digital precoder followed by matrix decomposition yields near-optimal performance for large-scale hybrid architectures~\cite{Yu2016Alternating, Ni2017Near,Jin2018Hybrid}. This approach provides an effective balance between performance and tractability.

\section{Iterative Tri-Hybrid Beamforming}\label{solveP1}

This section develops a model-based iterative algorithm for solving Problem~\eqref{P0}, following the two-stage strategy presented in Section~\ref{sec:PF}. We primarily focus on Problem~\ref{P1}, which constitutes the main challenge of the tri-hybrid beamforming design, whereas Problem~\ref{P2} has been well studied in the literature and can be effectively handled using existing methods~\cite{Yu2016Alternating, Ni2017Near,Jin2018Hybrid}. Our solution to Problem~\ref{P1} builds on the \ac{WMMSE} equivalence detailed below.

\subsection{Preliminaries on WMMSE Equivalence}

Problem~\ref{P1} is non-convex due to the strong coupling among the optimization variables. A well-established approach to address this type of problem in multiuser MIMO systems is to exploit the equivalence between \ac{WSR} maximization and \ac{WMMSE} minimization~\cite{Christensen2008Weighted,Shi2011Iteratively,Zhao2023Rethinking,Pellaco2022Matrix}. Based on this relationship, we can reformulate Problem~\ref{P1} into an equivalent \ac{WMMSE} minimization problem, which enables a more tractable algorithmic design. The mathematical core of the \ac{WMMSE} equivalence is summarized in the following lemma adapted from~\cite{Zhao2023Rethinking}.

\begin{lemma}[\!\!\cite{Zhao2023Rethinking}] \label{lemma1}
	Given a matrix $\Am\in\mathbb{C}^{n\times\ell}$ and any positive definite matrix $\Nm\in\mathbb{C}^{n\times n}$, the following equality holds:
	\begin{equation}\label{eq:lemma1}
		\ln\det\big(\mathbf{I} + \Am\Am^\HH\Nm^{-1}\big) = 
		\max_{\substack{\Omegam \succ \mathbf{0},\,\Gammam}} \ln\det(\Omegam) - \mathrm{Tr}(\Omegam \Em(\Gammam)) + \ell,
	\end{equation}
	where  $\Omegam\in\mathbb{C}^{\ell\times \ell}$ and $\Gammam\in\mathbb{C}^{n\times \ell}$ are auxiliary variables, and $\Em(\Gammam)$ is a matrix function defined as
	\begin{equation}
		\Em(\Gammam) = (\mathbf{I} - \Gammam^\HH \Am)(\mathbf{I} - \Gammam^\HH \Am)^\HH 
		+ \Gammam^\HH \Nm \Gammam.
	\end{equation}
	The optimal $\Gammam^\mathsf{opt}$ and $\Omegam^\mathsf{opt}$ for the right-hand side of~\eqref{eq:lemma1} are given in closed-form as
	\begin{align}
		\Gammam^\mathsf{opt} &= (\Am \Am^\HH + \Nm)^{-1} \Am,\label{eq:Gamma_opt}\\	
		\Omegam^\mathsf{opt} &= \big(\Em(\Gammam^\mathsf{opt})\big)^{-1} = \big(\mathbf{I} - (\Gammam^\mathsf{opt})^\HH \Am\big)^{-1}.\label{eq:Omega_opt}
	\end{align}
\end{lemma}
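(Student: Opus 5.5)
I would prove the lemma by maximizing the right-hand side of~\eqref{eq:lemma1} in two nested stages: first over $\Gammam$ with $\Omegam\succ\mathbf{0}$ fixed, then over $\Omegam$. The final step identifies the resulting value with the left-hand side via Sylvester's determinant identity.

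\emph{Step 1: optimize $\Gammam$.} For fixed $\Omegam\succ\mathbf{0}$, expanding $\Em(\Gammam)$ gives
\begin{equation*}
\Em(\Gammam)=\mathbf{I}-\Gammam^\HH\Am-\Am^\HH\Gammam+\Gammam^\HH(\Am\Am^\HH+\Nm)\Gammam .
\end{equation*}
Set $\Cm\triangleq\Am\Am^\HH+\Nm\succ\mathbf{0}$ and complete the square:
\begin{equation*}
\Em(\Gammam)=(\Gammam-\Cm^{-1}\Am)^\HH\Cm(\Gammam-\Cm^{-1}\Am)+\mathbf{I}-\Am^\HH\Cm^{-1}\Am .
\end{equation*}
This shows that $\Em(\Gammam)\succeq\Em(\Cm^{-1}\Am)$ in the Loewner order for every $\Gammam$. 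Since $\Omegam\succ\mathbf{0}$, it follows that $\mathrm{Tr}(\Omegam\Em(\Gammam))\ge\mathrm{Tr}(\Omegam\Em(\Cm^{-1}\Am))$. Hence $\Gammam^\mathsf{opt}=\Cm^{-1}\Am$ regardless of $\Omegam$, which is~\eqref{eq:Gamma_opt}. Substituting back gives
\begin{equation*}
\Em^\star=\mathbf{I}-\Am^\HH\Cm^{-1}\Am=\mathbf{I}-(\Gammam^\mathsf{opt})^\HH\Am .
\end{equation*}

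\emph{Step 2: optimize $\Omegam$.} Next I show that $\Em^\star\succ\mathbf{0}$. By the push-through (Woodbury) identity,
\begin{equation*}
\Em^\star=(\mathbf{I}+\Am^\HH\Nm^{-1}\Am)^{-1},
\end{equation*}
which is positive definite. The function $\Omegam\mapsto\ln\det\Omegam-\mathrm{Tr}(\Omegam\Em^\star)$ is strictly concave on the positive definite cone. Its gradient condition $\Omegam^{-1}=\Em^\star$ gives $\Omegam^\mathsf{opt}=(\Em^\star)^{-1}$, which is~\eqref{eq:Omega_opt}. At this point $\mathrm{Tr}(\Omegam^\mathsf{opt}\Em^\star)=\ell$, so the trace term cancels the $+\ell$. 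The optimal value is therefore
\begin{equation*}
\ln\det(\Em^\star)^{-1}=\ln\det(\mathbf{I}_\ell+\Am^\HH\Nm^{-1}\Am).
\end{equation*}

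\emph{Step 3: match the left-hand side.} Sylvester's identity $\det(\mathbf{I}_\ell+\Am^\HH\Nm^{-1}\Am)=\det(\mathbf{I}_n+\Am\Am^\HH\Nm^{-1})$ gives exactly the left-hand side of~\eqref{eq:lemma1}.

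The main point needing care is Step 1. One must justify that the matrix inequality transfers to the trace because $\Omegam\succ\mathbf{0}$. One must also justify exchanging the order of the two maximizations. This is harmless: the minimizing $\Gammam$ does not depend on $\Omegam$, so the joint maximum equals the sequential maximum. Verifying the inverse form of $\Em^\star$ is routine matrix algebra.
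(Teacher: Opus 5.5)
Your proof is correct and is essentially the standard argument that the paper defers to via \cite{Shi2011Iteratively, Zhao2023Rethinking}; the paper gives no proof of its own. That argument takes the MMSE choice $\Gammam^\mathsf{opt}=(\Am\Am^\HH+\Nm)^{-1}\Am$, independent of $\Omegam$, then applies the first-order condition $\Omegam=\Em^{-1}$ to the concave $\ln\det$-minus-trace term, then the determinant identity. Your completing-the-square/Loewner-order step is a tidy way to get global optimality in $\Gammam$ directly. Your worry about exchanging the two maximizations is unnecessary, since a joint maximum over $(\Omegam,\Gammam)$ always equals the iterated maximum.
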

\begin{proof}
See~\cite{Shi2011Iteratively, Zhao2023Rethinking}.
\end{proof}

Based on Lemma~1, we perform the following substitutions:
\begin{align}
	\Am &\leftarrow \Hm_k \Fm_\mathsf{A} \Fm_{\mathsf{D},k},\\
	\Nm &\leftarrow \sum_{i\neq k} \Hm_k \Fm_\mathsf{A} \Fm_{\mathsf{D},i} \Fm_{\mathsf{D},i}^\HH \Fm_\mathsf{A}^\HH \Hm_k^\HH + \sigma_k^2 \mathbf{I}_{M_k},
\end{align}
where $\Fm_{\mathsf{D},k}\in\mathbb{C}^{N_\Tt\times N_{\St,k}}$ is the virtual digital precoder corresponding to $\sv_k$ such that $[\Fm_{\mathsf{D},1},\Fm_{\mathsf{D},2},\dots,\Fm_{\mathsf{D},K}]=\Fm_\mathsf{D}$. Thus, Problem~\ref{P1} is equivalent to
\begin{equation}\label{MMSEM}
	\begin{aligned}
		\max_{\substack{
\Fm_\mathsf{A},\Fm_\mathsf{D} \\
{\{\Omegam_k \succ \mathbf{0}\},\{\Gammam_k\}}
}} & \sum_{k=1}^K \beta_k \Big(\log_2\det(\Omegam_k) - \mathrm{Tr}\big(\Omegam_k \Em_k(\Gammam_k)\big)\Big)\\
		\mathrm{s.t.} \ \ \quad & \big[\Fm_\mathsf{D} \Fm_{\mathsf{D}}^\HH \big]_{n,n} \leq P_n,\ \forall n,\\
		& |[\uv_n]_m|^2 = 1,\ \forall n,m.
	\end{aligned}
\end{equation}
Here, $\Em_k(\Gammam_k)$ is given by
\begin{multline}\label{eq:Ek}
	\Em_k(\Gammam_k) = (\mathbf{I} - \Gammam_k^\HH \Hm_k \Fm_\mathsf{A} \Fm_{\mathsf{D},k})(\mathbf{I} - \Gammam_k^\HH \Hm_k \Fm_\mathsf{A} \Fm_{\mathsf{D},k})^\HH\\
	+ \sum_{i\neq k} \Gammam_k^\HH \Hm_k \Fm_\mathsf{A} \Fm_{\mathsf{D},i} \Fm_{\mathsf{D},i}^\HH \Fm_\mathsf{A}^\HH \Hm_k^\HH \Gammam_k + \sigma_k^2 \Gammam_k^\HH \Gammam_k.
\end{multline}

While more variables are introduced, Problem~\eqref{MMSEM} becomes more tractable. The resulting problem admits a block-wise structure that naturally lends itself to an alternating optimization framework. Given a proper initialization, we can update variable blocks $\{\Omegam_k\}$, $\{\Gammam_k\}$, $\Fm_\mathsf{A}$, and $\Fm_\mathsf{D}$ sequentially while keeping the others fixed. In what follows, we show that each subproblem either admits a closed-form optimal solution or yields a closed-form update that guarantees monotonic improvement of the objective value.

\subsection{Update Auxiliary Variables $\{\Gammam_k\}$ and $\{\Omegam_k\}$}
According to~\eqref{eq:Gamma_opt}, we first update $\Gammam_k\in\mathbb{C}^{M_k\times N_{\St,k}}$ as
\begin{multline}\label{upd_Gammamk}
	\Gammam_k^\mathsf{opt} = \bigg(\sum_{i=1}^K \Hm_k \Fm_\mathsf{A} \Fm_{\mathsf{D},i}\Fm_{\mathsf{D},i}^\HH \Fm_\mathsf{A}^\HH \Hm_k^\HH + \sigma_k^2 \mathbf{I}_{M_k}\bigg)^{-1}\\
	\times \Hm_k \Fm_\mathsf{A} \Fm_{\mathsf{D},k},\ \forall k.
\end{multline}
With $\Gammam_k = \Gammam_k^\mathsf{opt}$, $\forall k$, we then update $\Omegam_k\in\mathbb{C}^{N_{\St,k}\times N_{\St,k}}$ according to~\eqref{eq:Omega_opt} as
\begin{equation}\label{upd_Omegak}
	\Omegam_k^\mathsf{opt} = \Big(\mathbf{I}_{N_{\St,k}}-\Gammam_k^\HH \Hm_k \Fm_\mathsf{A} \Fm_{\mathsf{D},k}\Big)^{-1},\ \forall k.
\end{equation}

\subsection{Update Precoding Variables $\Fm_\At$ and $\Fm_\Dt$} 
Following~\eqref{MMSEM}, we update $\Fm_\At$ and $\Fm_\Dt$ given updated $\{\Gammam_k, \Omegam_k\}$ by solving the following problem:
\begin{equation}\label{P3_full}
	\begin{aligned}
		\min_{\Fm_\At,\Fm_\Dt} \quad & \sum_{k=1}^K \beta_k \mathrm{Tr}\Big(\Omegam_k \Em_k\big(\Gammam_k\big)\Big)\\
		\mathrm{s.t.} \ \quad & \big[\Fm_\Dt \Fm_{\Dt}^\HH \big]_{n,n} \leq P_n,\ \forall n,\\
		& |[\uv_n]_m|^2 = 1,\ \forall n,m.
	\end{aligned}	
\end{equation}

Considering the per-\ac{DMA} input power constraint, we sequentially optimize the components associated with each \ac{DMA} to avoid coupled constraints. We partition $\Fm_\Dt$ as
\begin{equation}
	\begin{aligned}
		\Fm_\Dt &= [\Fm_{\Dt,1},\Fm_{\Dt,2},\dots,\Fm_{\Dt,K}] = \begin{bmatrix}
			\vv_1^\HH \\
			\vv_2^\HH \\
			\vdots \\
			\vv_{N_\Tt}^\HH
		\end{bmatrix},
	\end{aligned}
\end{equation}
where each $\vv_n^\HH\in\mathbb{C}^{1 \times N_{\St}}$ denotes the precoder for the signal fed into the $n^\text{th}$ \ac{DMA}. This partition ensures that each $\vv_n$ is subject to a single power constraint, i.e., $\|\vv_n\|_2^2 \leq P_n,\ \forall n$. 

Recall that $\Fm_\mathsf{A} = \mathrm{blkdiag}(\wv_{1},\wv_{2},\dots,\wv_{N_\Tt})$, where $\wv_n = (\Gm_n\uv_n + \mathsf{j}\,\etav_n)/2$ is the weight vector corresponding to the $n^\text{th}$ \ac{DMA} and $\uv_n$ is the tunable phase shift. Consequently, the digital precoder $\vv_n$ and the DMA tuning vector~$\uv_n$ naturally form a coupled variable block associated with the same physical unit. This motivates an inner block-wise optimization loop: we optimize $(\vv_n, \uv_n)$ for each \ac{DMA} sequentially, so that each subproblem involves only a single power constraint.

We now focus on optimizing $(\vv_n,\uv_n)$ associated with the $n^\text{th}$ \ac{DMA}. We partition $\Hm_k = [\Hm_{k,1},\Hm_{k,2},\dots,\Hm_{k,N_\Tt}]$, where each $\Hm_{k,n}\in\mathbb{C}^{M_k\times N_{\Ct}}$ is the channel matrix from the $n^\text{th}$ \ac{DMA} to the $k^\text{th}$ user. Moreover, we partition $\vv_n^\HH=[\vv_{n,1}^\HH,\vv_{n,2}^\HH,\dots,\vv_{n,K}^\HH]$, where $\vv_{n,k}^\HH\in\mathbb{C}^{1\times N_{\St,k}}$ is the precoder corresponding to symbol $\sv_k$ and the $n^\text{th}$ \ac{DMA}. By inspecting the objective function in~\eqref{P3_full} together with the expression of $\Em_k$ in~\eqref{eq:Ek}, we observe that the product $\Hm_k \Fm_\At \Fm_{\Dt,i}$ always appears as a composite term. Therefore, we can express $\Hm_k\Fm_\At \Fm_{\Dt,i}$ in terms of $\uv_n$ and $\vv_n$ as
\begin{equation}\label{eq:HFF}
	\begin{aligned}
		\Hm_k\Fm_\At \Fm_{\Dt,i} &= \sum_{n=1}^{N_\Tt} \Hm_{k,n} \wv_n \vv_{n,i}^\HH\\
		&= \Hm_{k,n} \wv_n \vv_{n,i}^\HH + \sum_{m\neq n} \Hm_{k,m} \wv_m \vv_{m,i}^\HH.
	\end{aligned}
\end{equation}
Substituting~\eqref{eq:HFF} into~\eqref{P3_full} and omitting all constant terms independent of $\uv_n$ and $\vv_n$, we obtain the subproblem for the $n^\text{th}$ \ac{DMA} as
\begin{equation}\label{eq:opt_idv}
	\begin{aligned}
		\min_{\uv_n,\vv_n} \quad & \|\vv_n\|_2^2\wv_n^\HH\Bm_{n,n}\wv_n + 2\mathrm{Re}\{\wv_n^\HH (\Qm_n-\Dm_n) \vv_n\}\\
		\mathrm{s.t.} \ \quad & \wv_n = (\Gm_n\uv_n + \mathsf{j}\,\etav_n)/2,\\
		& \|\vv_n\|_2^2 \leq P_n,\\
		& |[\uv_n]_m|^2 = 1,\ m=1,2,\dots,N_\Ct.
	\end{aligned}	
\end{equation}
where
\begin{align}
	\Bm_{q,p} &= \sum_{k=1}^K \beta_k \Hm_{k,q}^\HH \Gammam_k \Omegam_k \Gammam_k^\HH \Hm_{k,p}\in\mathbb{C}^{N_\Ct\times N_\Ct},\notag\\
	\Qm_n &= \sum_{m\neq n}\Bm_{n,m}\wv_m\vv_m^\HH\in\mathbb{C}^{N_\Ct\times N_\St}, \notag\\
	\Dm_n &= \begin{bmatrix}
		\beta_1 \Hm_{1,n}^\HH\Gammam_1 \Omegam_1,\dots,\beta_K \Hm_{K,n}^\HH\Gammam_K \Omegam_K \\
	\end{bmatrix}\in\mathbb{C}^{N_\Ct\times N_\St}.\notag
\end{align}

We further adopt an alternating optimization strategy for Problem~\eqref{eq:opt_idv} \ac{w.r.t.} $\uv_n$ and $\vv_n$, which results in simple closed-form update expressions, as detailed below.

\subsubsection{Update $\vv_n$ Given $\uv_n$} 
The optimization of $\vv_n$ given $\uv_n$ can be expressed as
\begin{equation}\label{eq:A1}
	\begin{aligned}
		\min_{\vv_n} \quad & a_n\|\vv_n\|_2^2 + 2\mathrm{Re}\{\dv_n^\HH \vv_n\}\\
		\mathrm{s.t.} \quad & \|\vv_n\|_2^2 \leq P_n,	
	\end{aligned}
\end{equation}
where $a_n = \wv_n^\HH \Bm_{n,n} \wv_n$ and $\dv_n = (\Qm_n^\HH - \Dm_n^\HH)\wv_n$. 
When $\dv_n\neq \mathbf{0}$, this subproblem has a closed-form solution as
\begin{equation}\label{eq:vnOptSolu}
	\vv_n^\mathsf{opt} = -\dv_n \min\bigg(\frac{1}{a_n}, \frac{\sqrt{P_n}}{\|\dv_n\|_2}\bigg).
\end{equation}
Since $\Omegam_k \succ \mathbf{0}$ by Lemma~\ref{lemma1}, $a_n > 0$ always holds. In the extreme case where $\dv_n=\mathbf{0}$, the optimal solution is $\vv_n^\mathsf{opt} = \mathbf{0}$.
\subsubsection{Update $\uv_n$ Given $\vv_n$}
Fixing $\vv_n$, the optimization of $\uv_n$ in~\eqref{eq:opt_idv} can be rewritten as 
\begin{equation}\label{eq:optun}
	\begin{aligned}
		\min_{\uv_n} \quad & \uv_n^\HH \Mm_n \uv_n + \mathrm{Re}\{\uv_n^\HH \bv_n\}\\
		\mathrm{s.t.} \quad & |[\uv_n]_m| = 1,\ \forall m,
	\end{aligned}	
\end{equation}
where $\Mm_n \!=\! \frac{\|\vv_n\|_2^2}{4}\Gm_n^\HH\Bm_{n,n} \Gm_n$ and $\bv_n\! =\! \Gm_n^\HH (\Qm_n\!-\! \Dm_n) \vv_n\! +\! 2\mathsf{j}\Mm_n\mathbf{1}$. Here, $\mathbf{1}$ denotes the $N_\Ct$-length all-ones vector. 

This subproblem is a unit-modulus quadratic program, which can be solved using techniques such as manifold optimization~\cite{boumal2023introduction}. However, such methods are iterative and may incur substantial computational overhead, especially since this is a subproblem appearing within an outer alternating-optimization loop. Computing the exact optimum of~\eqref{eq:optun} is practically unnecessary, and an appropriate update that ensures a monotonic non-increase of the objective is adequate. {Following the principle of majorization-minimization optimization~\cite{Sun2017Majorization,Breloy2021Majorization,He2022QCQP}, we instead minimize a quadratic upper bound of the objective, which leads to a computationally lightweight closed-form update given by} \begin{equation}\label{eq:un_upd}
\uv_n = e^{\mathsf{j}\arg\big((\rho_n\mathbf{I} - \Mm_n)\uv_n - \tfrac{1}{2}\bv_n\big)},
\end{equation}
where $\rho_n$ is chosen such that $\rho_n \geq \lambda_\mathsf{max}(\Mm_n)$, with $\lambda_\mathsf{max}(\Mm_n)$ denoting the maximum eigenvalue of $\Mm_n$. The proof of the monotonic non-increasing property of~\eqref{eq:un_upd} is provided in the Appendix.

\subsection{Algorithm Summary}

\begin{algorithm}[t]
	\caption{An Iterative Algorithm to Solve Problem~\ref{P1}}
	\label{ModelAlgo}
	\begin{algorithmic}[1]
		\Statex \textbf{Input:} Channel matrices $\{\Hm_k\}$, priority weights $\{\beta_k\}$, per-\ac{DMA} input power constraints $\{P_n\}$, complex waveguide response $\{\etav_n\}$, maximum iterations $L$.
		\State \textbf{Initialize:} $\Fm_\At^{(0)}$, $\Fm_\Dt^{(0)}$, set iteration index $\ell=0$.
		\Repeat
			\State Update $\big\{\Gammam_k^{(\ell+1)}\big\}$ using~\eqref{upd_Gammamk} with $\big\{\Fm_\At^{(\ell)}, \Fm_\Dt^{(\ell)}\big\}$.
			\State Update $\big\{\Omegam_k^{(\ell+1)}\big\}$ using~\eqref{upd_Omegak} with $\big\{\Fm_\At^{(\ell)}, \Fm_\Dt^{(\ell)}, \Gammam_k^{(\ell+1)}\!\big\}$.
			\For{$n=1$ to $N_\Tt$}
				\State Update $\vv_n^{(\ell+1)}$ using~\eqref{eq:vnOptSolu}.
				\State Update $\uv_n^{(\ell+1)}$ using~\eqref{eq:un_upd}.
			\EndFor
			\State $\wv_n^{(\ell+1)} = \big(\Gm_n\uv_n^{(\ell+1)} + \mathsf{j}\,\etav_n\big)/2$, $n=1,\dots,N_\Tt$
			\State $\Fm_\At^{(\ell+1)} = \mathrm{blkdiag}\big(\wv_{1}^{(\ell+1)},\dots,\wv_{N_\Tt}^{(\ell+1)}\big)$.
			\State $\Fm_\Dt^{(\ell+1)} = \big[\vv_1^{(\ell+1)},\dots,\vv_{N_\Tt}^{(\ell+1)}\big]^\HH$. 
			\State $\ell \leftarrow \ell + 1$
		\Until{$\ell \geq L$ or convergence.}
		
		\State Set $\Fm_\mathsf{A}^\mathsf{opt} = \Fm_\At^{(\ell)}$ and $\Fm_\mathsf{D}^\mathsf{opt} = \Fm_\Dt^{(\ell)}$.
	\end{algorithmic}
\end{algorithm}

Overall, the proposed iterative algorithm for solving~\ref{P1} is summarized in Alg.~\ref{ModelAlgo}. Although this algorithm follows an alternating optimization framework, each subproblem admits a closed-form update that guarantees a monotonic non-decrease of the objective value of~\ref{P1}. Since this objective function, i.e., \ac{WSR}, is upper-bounded, {Alg.~\ref{ModelAlgo} converges in terms of the objective value of Problem~\ref{P1}.}

After obtaining $\Fm_\mathsf{A}^\mathsf{opt}$ and $\Fm_\mathsf{D}^\mathsf{opt}$, the virtual digital precoder $\Fm_\mathsf{D}^\mathsf{opt}$ is decomposed into the hybrid precoders $\Fm_\mathsf{RF}^\mathsf{opt}$ and $\Fm_\mathsf{BB}^\mathsf{opt}$ by solving Problem~\ref{P2}. Since this decomposition problem has been extensively studied and numerous existing methods can be readily applied with minor adaptations, we do not detail the solution of Problem~\ref{P2} in this paper. Instead, we adopt the algorithm in~\cite{Jin2018Hybrid} whenever solving~\ref{P2} is required. {However, it should be noted that, since this RF-baseband decomposition is solved approximately, it does not guarantee local optimality for the original nonconvex Problem~\ref{P0}.}

\subsection{Complexity Analysis}\label{sec:CA_model}

The computational complexity of Alg.~\ref{ModelAlgo} is evaluated per outer iteration. Updating the auxiliary variables $\{\Gammam_k\}_{k=1}^K$ leads to complexity $\mathcal{O}\big(\sum_{k=1}^K M_k N_\Ct N_\Tt^2\big)$, while updating $\{\Omegam_k\}_{k=1}^K$ requires $\mathcal{O}\big(\sum_{k=1}^K N_{\St,k}^3 + N_{\St,k} N_\Ct N_\Tt^2\big)$. Constructing the coupling matrices ${\Bm_{q,p}}$, ${\Qm_n}$, and ${\Dm_n}$ dominates the matrix-assembly cost and scales as $\mathcal{O}\big(\sum_{k=1}^K N_\Tt^2 (N_\Ct M_k N_{\St,k} + N_\Ct^2 N_{\St,k})\big)$. The sequential DMA updates require eigenvalue and matrix-vector operations, yielding $\mathcal{O}(N_\Tt N_\Ct^3 + N_\Tt N_\Ct^2 N_\St)$. Therefore, the total per-iteration complexity can be written as
\begin{equation}\label{eq:Citer}
	\mathcal{C}_{\mathsf{iter}} = \mathcal{O}\Big(N_\Tt^2 N_\Ct^3 \sum_{k=1}^K M_k N_{\St,k}\Big).
\end{equation}
If the algorithm runs for $L$ iterations, the overall complexity is $\mathcal{O}(L\mathcal{C}_{\mathsf{iter}})$. For large DMA arrays, the dominant scaling is governed by the quadratic dependence on $N_\Tt$.

\begin{figure*}[t]
    \centering

    \subfloat[Solving Problem~\ref{P1} using the proposed Alg.~\ref{ModelAlgo}]{
        \input{figs/fig_P1.tex}
        \label{fig:a100}}
    \hfill
    \subfloat[Solving Problem~\ref{P2} using the algorithm in~\cite{Jin2018Hybrid}]{
        \input{figs/fig_P2.tex}
        \label{fig:b100}}

    \caption{Convergence performance and runtime of the algorithms for solving Problem~\ref{P1} and~\ref{P2}.}
    \label{fig:comparison}
\end{figure*}

\section{Deep Algorithm Unfolding}

{Although Alg.~\ref{ModelAlgo} has monotonic objective-value convergence and low per-iteration complexity, its multi-domain alternating structure may still require a large number of iterations. This limitation motivates the deep-unfolding design below, which treats Alg.~\ref{ModelAlgo} as a structured model-based backbone and learns to improve its finite-iteration performance from data.} 

\subsection{Rationale for the Unfolding Design}

Using a representative problem instance, Fig.~\ref{fig:comparison} evaluates the convergence performance of solving Problems~\ref{P1} and~\ref{P2} using the proposed Alg.~\ref{ModelAlgo} and~\cite{Jin2018Hybrid}, respectively. The system setup and simulation environment are described in Section~\ref{sec:SimSetUp}. It is clearly shown that solving Problem~\ref{P1} constitutes the main computational bottleneck. Specifically, Alg.~\ref{ModelAlgo} requires several seconds to converge, whereas solving Problem~\ref{P2} is approximately two orders of magnitude faster.

Motivated by these observations, we seek to accelerate Alg.~\ref{ModelAlgo} to solve Problem~\ref{P1} faster. We achieve this by incorporating a deep unfolding architecture. The goal is to learn a more efficient update rule from data, leading the algorithm to a solution with the desired performance in only a few iterations. Consider a real-time communication scenario in which only a limited number of algorithmic iterations (i.e., a constrained runtime) can be performed within each channel coherence interval. We assume that Alg.~\ref{ModelAlgo} is allowed to run for at most $L$ iterations. Under such constraints, asymptotic convergence becomes less relevant, and the performance achieved after $L$ iterations constitutes the primary performance metric. To optimize the $L$-iteration performance, we adopt an algorithm unfolding framework whose computational pipeline is depicted in Fig.~\ref{fig_unfold}. The unfolded algorithm consists of $L$ layers, each corresponding to one iteration of Alg.~\ref{ModelAlgo}. Building upon Alg.~\ref{ModelAlgo}, each layer is augmented with a set of learnable components. These components are detailed in the subsequent subsections.

{The proposed unfolded solver is related to, yet distinct from, existing \ac{WMMSE} algorithms and their variants. First, although Alg.~\ref{ModelAlgo} is developed within the \ac{WMMSE} framework~\cite{Christensen2008Weighted,Shi2011Iteratively,Zhao2023Rethinking}, its main contribution lies in deriving low-complexity updates for the resulting subproblems subject to per-\ac{DMA} input-power and Lorentzian constraints. Regarding the unfolding design, the existing \ac{GNN}-based unfolding method in~\cite{Chowdhury2021Unfolding} also incorporates learnable modules into a \ac{WMMSE}-type procedure. However, it focuses on scalar power allocation in single-antenna networks, whereas the present work targets complex-valued matrix precoding and antenna-domain \ac{DMA} tuning. Another relevant variant is the matrix-inverse-free \ac{WMMSE} algorithm in~\cite{Pellaco2022Matrix}, which modifies conventional \ac{WMMSE} updates to reduce the per-iteration complexity by avoiding matrix inversions. In contrast, our unfolding design addresses a different bottleneck by learning acceleration components that improve finite-iteration performance, thereby reducing the number of required iterations.}

\begin{figure*}[t]
  \centering
  \includegraphics[width=0.82\linewidth]{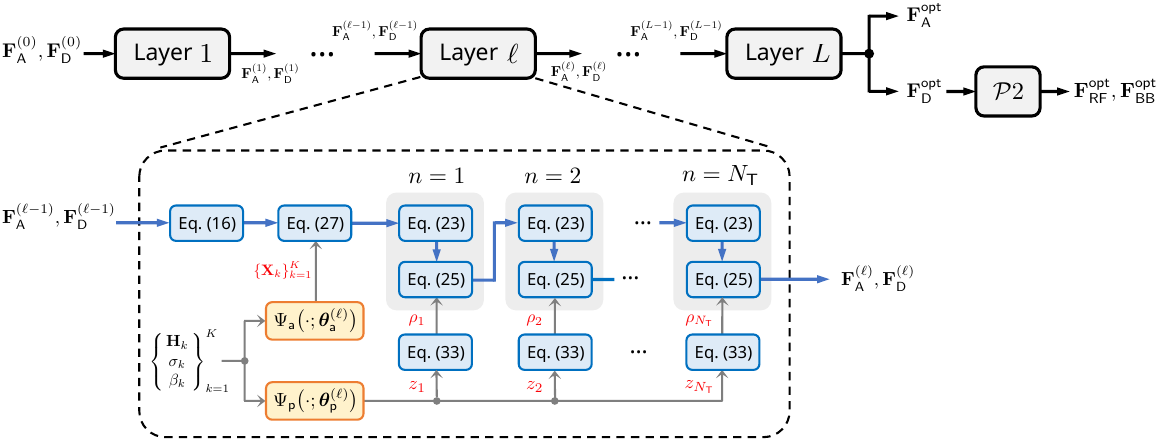}
\caption{Computational pipeline of the unfolded Alg.~\ref{ModelAlgo}.}
  \label{fig_unfold}
\end{figure*}

\subsection{Accelerating Updates of Auxiliary Variables}\label{sec:AUAV}
We first modify~\eqref{upd_Omegak} to accelerate the update of the auxiliary variables $\{\Gammam_k,\Omegam_k\}$ as 
\begin{equation}\label{eq:OmegaXk}
	\Omegam_k^\mathsf{opt} = \Big(\mathbf{I}_{N_{\St,k}}-\Gammam_k^\HH \Hm_k \Fm_\mathsf{A} \Fm_{\mathsf{D},k}\Big)^{-1} + \Xm_k + \Xm_k^\HH,\ \forall k,
\end{equation}
where $\Xm_k \in \mathbb{C}^{N_{\St,k} \times N_{\St,k}}$ is a trainable matrix that can be optimized to guide the update toward improved solutions beyond those obtained by the original update rule. Since the original $\Omegam_k^\mathsf{opt}$ is Hermitian, we add $\Xm_k + \Xm_k^\HH$ to preserve its Hermiticity. This augmented matrix is obtained by
\begin{equation}\label{eq:Psia}
	\{\Xm_k\}_{k=1}^K = \Psi_\mathsf{a}\big(\{\Hm_k,\sigma_k,\beta_k\}_{k=1}^K;\thetav_\mathsf{a}\big),
\end{equation}
where $\Psi_\mathsf{a}$ denotes a deep neural network and $\thetav_\mathsf{a}$ denotes its trainable weights. {The subscript ``$\mathsf{a}$'' is short for ``auxiliary,'' indicating that the corresponding network is used for auxiliary-variable updates.} We allocate a distinct network for each layer in Fig.~\ref{fig_unfold} with the same architecture but different weights, which is distinguished using a superscript $(\ell)$ on $\thetav_\mathsf{a}$. The design of $\Psi_\mathsf{a}$ is specified as follows.

\subsubsection{Deep Network Selection}
Function~\eqref{eq:Psia} indicates that the augmented terms $\{\Xm_k\}_{k=1}^K$ are learned from a given set of system parameters, i.e., a problem instance. In multiuser \ac{MIMO} systems, the number of active users $K$ is generally uncertain and may vary across scenarios, resulting in problem instances with different dimensions. Hence, $\Psi_\mathsf{a}$ must be trained in an inductive manner. In addition, this module should be permutation equivariant \ac{w.r.t.} user ordering, meaning that re-indexing the users should result in a consistent re-ordering of the outputs. To satisfy these requirements, we implement $\Psi_\mathsf{a}$ as an $I$-layer \ac{GraphSAGE} network~\cite{NIPS2017_5dd9db5e}. As illustrated in Fig.~\ref{fig:GNN1}, this is a \ac{GNN} defined on a fully connected graph with $K$ nodes, each representing a user. This architecture naturally supports a variable number of users and ensures permutation equivariance, since the learned message-passing mechanism is independent of both the graph size and the node indexing.

\begin{figure*}[t]
    \centering

    \subfloat[$\Psi_\mathsf{a}(\cdot;\thetav_\mathsf{a})$]{
        \includegraphics[width=0.45\linewidth]{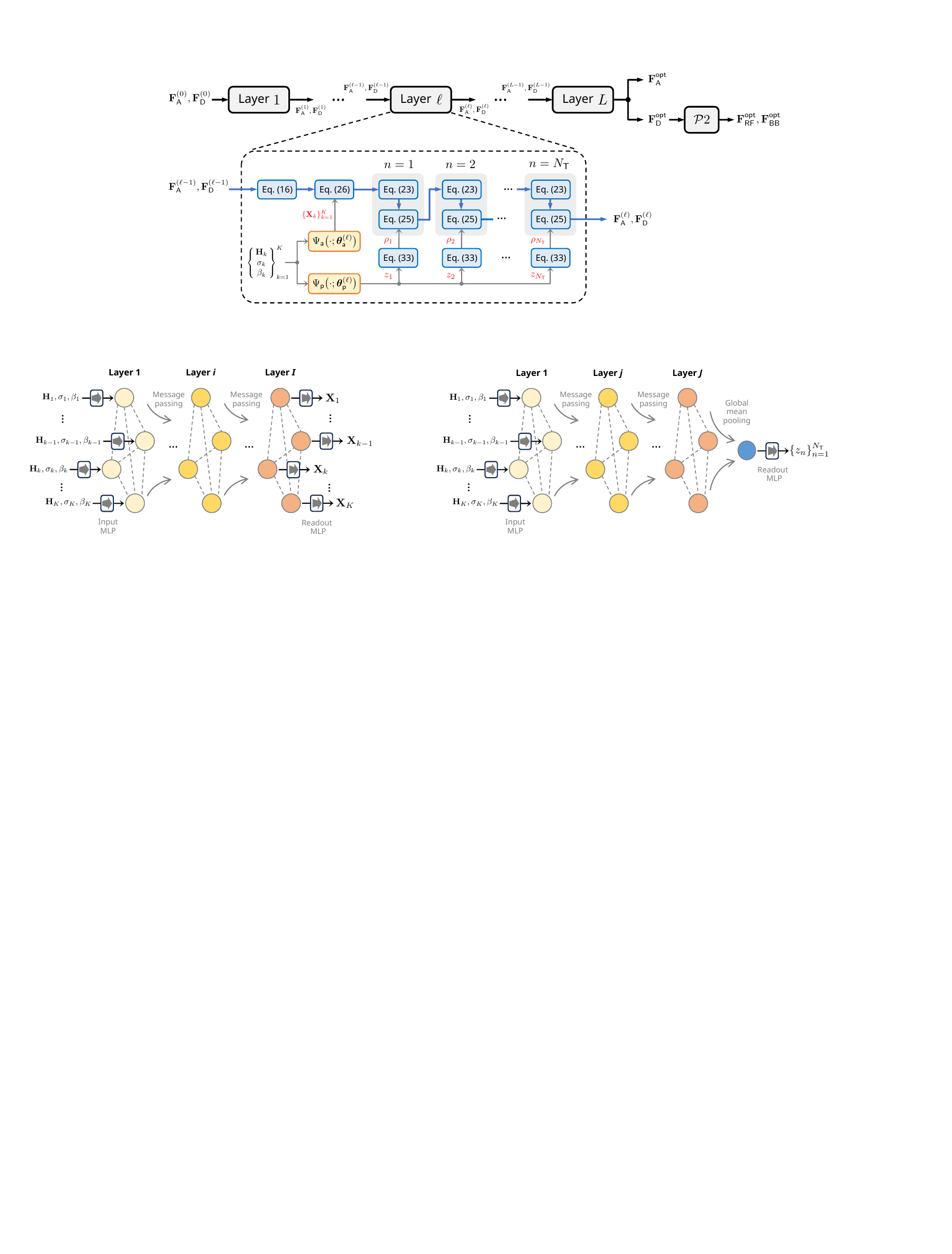}
        \label{fig:GNN1}}
    \hfill
    \subfloat[$\Psi_\mathsf{p}(\cdot;\thetav_\mathsf{p})$]{
        \includegraphics[width=0.495\linewidth]{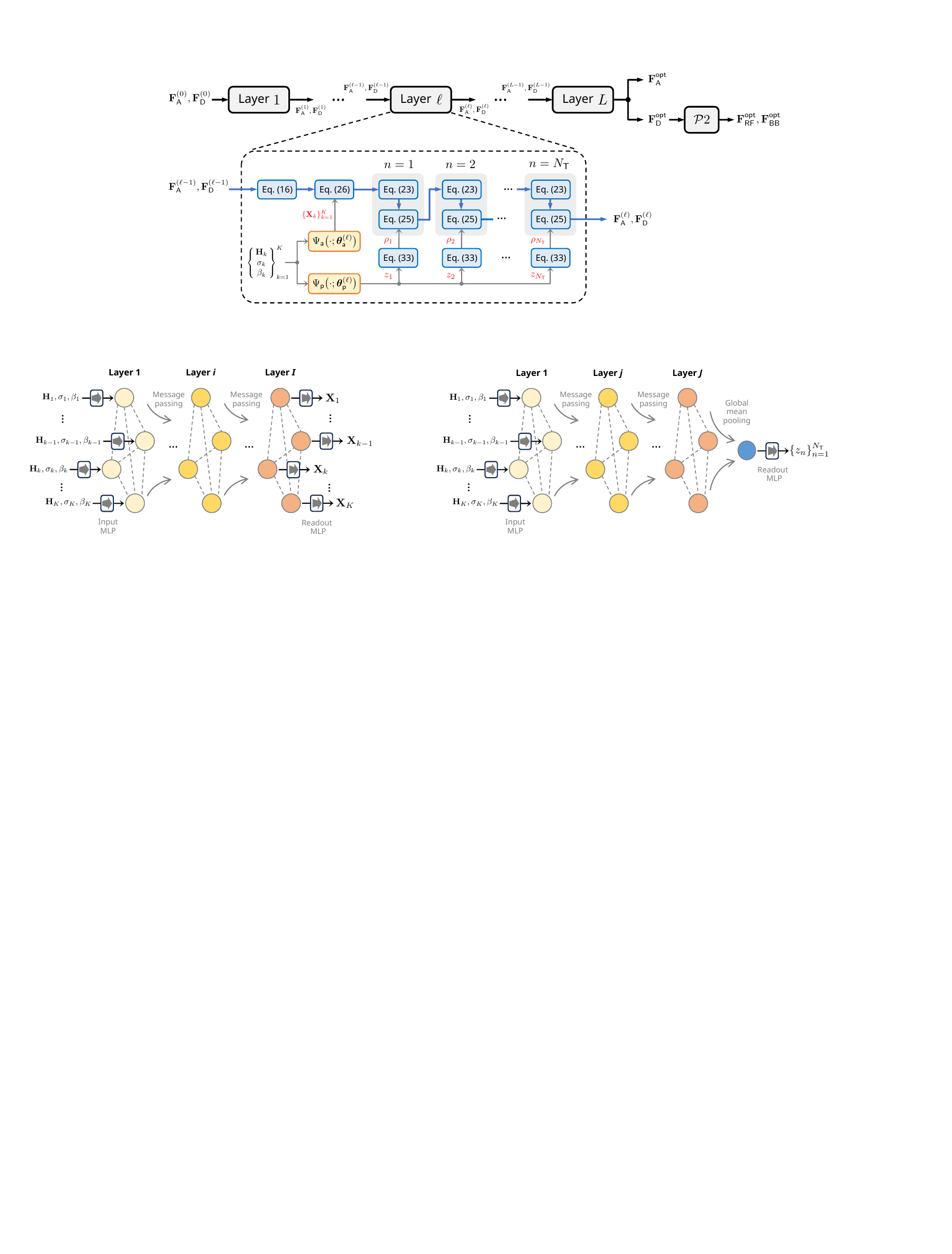}
        \label{fig:GNN2}}

    \caption{Architectures of the \ac{GNN} modules $\Psi_\mathsf{a}(\cdot;\thetav_\mathsf{a})$ and $\Psi_\mathsf{p}(\cdot;\thetav_\mathsf{p})$.}
    \label{fig:GNNs}
\end{figure*}

\subsubsection{Input Data Preprocessing}
The input of $\Psi_\mathsf{a}$ is the set of parameters that fully describes the problem instance. The input of the $k^\text{th}$ node  consists of the $k^\text{th}$ user-specific parameters $(\Hm_k, \sigma_k, \beta_k)$. We use a $K$-row matrix $\Ym^{(0)}$ to denote the initial node feature input to the first layer of the \ac{GNN}, where each row corresponds to the feature of a node/user. We illustrate in the following how $\Ym^{(0)}$ is computed for a given problem instance. 

A problem instance is uniquely determined by parameters $\{\Hm_k, \sigma_k,\beta_k\}_{k=1}^K$. Further, observing the definition of the spectral efficiency in~\eqref{eq:Rk}, we see that its landscape \ac{w.r.t.} $(\Fm_\At,\Fm_\mathsf{RF},\Fm_\mathsf{BB})$ does not change if we scale $\{\Hm_k, \sigma_k \}_{k=1}^K$. More precisely, we have 
\begin{multline}
	R_k\big(\Fm_\At,\Fm_\mathsf{RF},\Fm_\mathsf{BB};\{\Hm_k, \sigma_k \}_{k=1}^K\big) \\= R_k\big(\Fm_\At,\Fm_\mathsf{RF},\Fm_\mathsf{BB};\{a\Hm_k, a\sigma_k \}_{k=1}^K\big),
\end{multline}
given any non-zero number $a$. Similarly, scaling the user weights $\{\beta_k\}_{k=1}^K$ only introduces a common multiplicative factor to the landscape of \ac{WSR} \ac{w.r.t.} $(\Fm_\At,\Fm_\mathsf{RF},\Fm_\mathsf{BB})$. 

To mitigate these ambiguities and also ensure stable model training, we first normalize the system parameters. Specifically, we define
\begin{align}
	\Hm & \triangleq \begin{bmatrix}
		\sigma_1 & \mathrm{vec}(\mathrm{Re}\{\Hm_1\})^\TT & \mathrm{vec}(\mathrm{Im}\{\Hm_1\})^\TT \\
		\vdots & \vdots & \vdots \\
		\sigma_K & \mathrm{vec}(\mathrm{Re}\{\Hm_K\})^\TT & \mathrm{vec}(\mathrm{Im}\{\Hm_K\})^\TT
	\end{bmatrix},\notag\\
	\betav &\triangleq \begin{bmatrix}
		\beta_1,\ \beta_2,\ \dots,\ \beta_K
	\end{bmatrix}^\TT.\notag
\end{align}
Note that the construction of $\mathbf{H}$ requires the assumption $M_1 = \cdots = M_K \triangleq M$, which can be freely specified as a design parameter at the \ac{BS} for downlink beamforming. If this condition is not satisfied, the channel matrices $\{\mathbf{H}_k\}_{k=1}^K$ can be appropriately truncated or cyclically extended to ensure dimensional alignment. Then, we normalize them as 
\begin{align}\label{eq:Htilde}
	\tilde{\Hm} = \frac{\sqrt{K(2MN_\Ct N_\Tt+1)}\Hm}{\|\Hm\|_\Ft},\quad \tilde{\betav} = \frac{\sqrt{K}\betav}{\|\betav\|_2},
\end{align}
so that the squared norms of $\tilde{\Hm}$ and $\tilde{\betav}$ equal their respective dimensions. 
A simple construction of $\Ym^{(0)}$ is thus given by $[\tilde{\Hm}, \tilde{\betav}]$. When using equal user weights, i.e., $\beta_1=\cdots=\beta_K$, the vector $\tilde{\betav}$ can be omitted. To further enhance the representational capacity of the deep model, an \ac{MLP} can be employed to extract latent features prior to the \ac{GNN}, yielding
\begin{equation}\label{eq:Y0}
	\Ym^{(0)} = \mathrm{MLP}\big([\tilde{\Hm},\ \tilde{\betav}];\thetav_\mathsf{a}^{\mathsf{in}}\big),
\end{equation}
where $\thetav_\mathsf{a}^{\mathsf{in}}$ denotes the trainable weights of this \ac{MLP}.

\subsubsection{Message Passing}
Let \(\yv_k^{(i)}\) denote the feature of node $k$ at layer $i$. The corresponding node feature matrix of this layer is then defined as \(\Ym^{(i)} \triangleq \big[\yv_1^{(i)}, \yv_2^{(i)}, \dots, \yv_K^{(i)}\big]^\TT\).
The message passing from layer $i$ to layer $i+1$ using the \ac{GraphSAGE} algorithm can be described as~\cite{NIPS2017_5dd9db5e}
\begin{equation}\notag
	\yv_k^{(i+1)} = \phi\bigg(\Wm^{(i)}\bigg[\big(\yv_k^{(i)}\big)^\TT, \Big(\frac{1}{K-1} \sum_{r\neq k}\yv_r^{(i)}\Big)^\TT\bigg]^\TT\bigg),\ \forall k
\end{equation}
where $\Wm^{(i)}$ denotes the trainable weight associated with layer~$i$, and $\phi(\cdot)$ denotes a nonlinear activation function, such as the rectified linear unit (ReLU).

\subsubsection{Final Feature Readout}
We read out the matrix $\Xm_k$ from node $k$ in the last layer of the \ac{GNN} through another \ac{MLP}. Mathematically, we have
\begin{equation}
	\big[\mathrm{vec}(\mathrm{Re}(\Xm_k))^\TT,\! \mathrm{vec}(\mathrm{Im}(\Xm_k))^\TT\big]^\TT = \mathrm{MLP}\big(\yv_k^{(I)};\thetav_\mathsf{a}^\mathsf{out}\big),
\end{equation}
where $\thetav_\mathsf{a}^\mathsf{out}$ denotes the trainable weights of this \ac{MLP}. 

Based on the above architecture, the set of trainable parameters in module $\Psi_\mathsf{a}$ is given by $\thetav_\mathsf{a}=\big\{\thetav_\mathsf{a}^\mathsf{in},\{\Wm^{(i)}\}_{i=1}^{I},\thetav_\mathsf{a}^\mathsf{out}\big\}$.

\subsection{Accelerating Updates of Precoders}

In addition to the auxiliary variables, we further accelerate the updates of the precoder-related variables $\{\vv_n,\uv_n\}$ by unfolding the iterative rule in~\eqref{eq:un_upd} into a learnable architecture. Rather than introducing an augmented term into~\eqref{eq:un_upd} as in~\eqref{eq:OmegaXk}, we adopt a more restrained and computationally efficient strategy, i.e., using a neural network to adaptively select the hyperparameter $\rho_n$. Because this module outputs only scalar parameters, it remains lightweight. Specifically, we parameterize $\rho_n$ in~\eqref{eq:un_upd} as
\begin{equation}\label{eq:rhoz}
\rho_n = \lambda_{\mathsf{max}}(\Mm_n) + z_n, \quad \forall n,
\end{equation}
where the offsets $\{z_n\}_{n=1}^{N_\Tt}$ are generated by a deep network
\begin{equation}
\{z_n\}_{n=1}^{N_\Tt} = \Psi_\mathsf{p}\big(\{\Hm_k, \sigma_k, \beta_k\}_{k=1}^K; \thetav_\mathsf{p}\big).
\end{equation}
{Similarly, the subscript ``$\mathsf{p}$'' here is short for ``precoder,'' indicating that the corresponding network is used for precoder updates.}
Again, each layer in Fig.~\ref{fig_unfold} has an independent $\Psi_\mathsf{p}$ module with its own set of weights $\thetav_\mathsf{p}^{(\ell)}$.

The function $\Psi_\mathsf{p}$ is implemented as a $J$-layer \ac{GNN}, whose architecture is illustrated in Fig.~\ref{fig:GNN2}. The input data preprocessing and message-passing procedure in $\Psi_\mathsf{p}$ follow the same design as described in Section~\ref{sec:AUAV}. Since the desired outputs $\{z_n\}_{n=1}^{N_\Tt}$ are global quantities that are independent of the number of nodes/users, we employ a graph-level mean pooling operator in the final layer to aggregate node-wise features into a global representation. Specifically, letting $\Ym^{(J)}$ denote the node feature matrix of the last GNN layer, we compute
\begin{equation}
\yv^\TT = \frac{1}{K}\mathbf{1}^\TT \Ym^{(J)}.
\end{equation}
The desired output $\zv = [z_1,\dots,z_{N_\Tt}]^\TT$ is then obtained via an \ac{MLP} readout layer, i.e.,
\begin{equation}
	\zv = \mathrm{MLP}(\yv; \thetav_\mathsf{p}^\mathsf{out}).
\end{equation}

\subsection{Model Training}
The proposed unfolded algorithm becomes effective only after proper training. A few key design aspects of the training process are discussed below.

\subsubsection{Loss Function}\label{sec:LF}

Let $\mathcal{H}\triangleq\{\Hm_k, \sigma_k, \beta_k\}_{k=1}^K$ denote a problem instance for brevity. We denote the mapping implemented by each layer in Fig.~\ref{fig_unfold} as $g(\cdot)$, such that
\begin{equation}
	\big\{\Fm_\At^{(\ell)},\Fm_\Dt^{(\ell)}\big\}
	= g\big(\Fm_\At^{(\ell-1)}, \Fm_\Dt^{(\ell-1)}; \mathcal{H}, \thetav^{(\ell)}\big),
\end{equation}
where $\thetav^{(\ell)}$ denotes the set of trainable parameters at layer $\ell$, consisting of $\thetav_\mathsf{a}^{(\ell)}$ and $\thetav_\mathsf{p}^{(\ell)}$. Stacking $L$ layers together defines the overall mapping $G(\cdot)$ as:
\begin{equation}
	\big\{\Fm_\At^{(L)}, \Fm_\Dt^{(L)}\big\}
	= G\big(\Fm_\At^{(0)}, \Fm_\Dt^{(0)}; \mathcal{H}, \Thetam\big),
\end{equation}
where $\Thetam \triangleq \{\thetav^{(1)}, \dots, \thetav^{(L)}\}$ collects all trainable parameters.

Our objective is to maximize the system sum-rate achieved after $L$ layers. Since the sum-rate is a function of both the problem instance $\mathcal{H}$ and the resulting precoders, the achieved performance can be written as
\begin{equation}
	R\big(\Fm_\At^{(L)}, \Fm_\Dt^{(L)}; \mathcal{H}\big)
	= R\Big(G\big(\Fm_\At^{(0)}, \Fm_\Dt^{(0)}; \mathcal{H}, \Thetam\big); \mathcal{H}\Big).
\end{equation}
Given a dataset of problem instances $\bm{\mathcal{D}}$, we define the training loss as
\begin{equation}\label{eq:LTheta}
	\mathcal{L}(\Thetam)
	= -\mathbb{E}_{\mathcal{H}\sim\bm{\mathcal{D}}}\ 
	R\Big(G\big(\Fm_\At^{(0)}, \Fm_\Dt^{(0)}; \mathcal{H}, \Thetam\big); \mathcal{H}\Big).
\end{equation}

Model training amounts to minimizing $\mathcal{L}(\Thetam)$ \ac{w.r.t.} $\Thetam$, which is a nonconvex optimization problem. Nevertheless, since all components in~\eqref{eq:LTheta} are differentiable, the optimization can be efficiently performed using stochastic gradient descent. This is an unsupervised learning method; only problem instances are required, without any labeled optimal solutions, which facilitates practical implementation.

\subsubsection{Iteration Initialization}

A critical aspect of the proposed framework is that the initialization of the $L$-layer iterations must be fixed. This can be seen from~\eqref{eq:LTheta}, where the loss explicitly depends on the initial precoders $\big\{\Fm_\At^{(0)}, \Fm_\Dt^{(0)}\big\}$. Any change in the initialization alters the optimization landscape of the training loss. Therefore, the initialization must remain identical during both the training and inference phases. A similar observation was reported in~\cite{Sun2018Learning}.

\subsubsection{\texorpdfstring{Model Weight Initialization}{Model Weight Initialization}}\label{sec:MWI}

The model parameters $\Thetam$ can be initialized in various ways. We recommend initializing the trainable weights in the output layer of each module to zero, while initializing all other weights using standard schemes such as Xavier or Kaiming initialization. This strategy significantly improves training stability. As can be observed from~\eqref{eq:OmegaXk} and~\eqref{eq:rhoz}, this initialization ensures that the initial augmented terms $\{\Xm_k\}$ and $\{z_n\}$ are zero. Consequently, before training, the unfolded architecture exactly reproduces the original Alg.~\ref{ModelAlgo}. This provides a strong and meaningful starting point, allowing training to progressively improve upon the model-based solution using data. In contrast, a fully random initialization of $\Thetam$ may lead to poor initial performance, which is often worse than the model-based baseline, thereby slowing training convergence.

{
\subsection{Complexity and Scalability Discussion}\label{sec:complexity_scalability}

\subsubsection{GNN Complexity}

We first quantify the cost of the trainable \ac{GNN} modules. Following the standard \ac{GraphSAGE} message-passing implementation~\cite{NIPS2017_5dd9db5e}, let $\mathcal{E}$ denote the set of graph edges used for neighbor aggregation, where $|\mathcal{E}|=\mathcal{O}(K^2)$ for the fully connected graph used in this paper. Let $M_1=\dots=M_K=M$ and omit $\tilde{\betav}$. The input feature dimension of each user node is $d = 2M N_\Ct N_\Tt + 1$. For a generic module with $S$ \ac{GraphSAGE} layers, hidden width $h$, and output dimension $r$, its forward complexity can be summarized as
\begin{equation}
	\mathcal{C}_{\mathsf{GS}}(d,h,r,S)
	= \mathcal{O}\big(K\mathcal{C}_{\mathsf{in}}(d,h)
	+ S(|\mathcal{E}|h+Kh^2)
	+ \xi\mathcal{C}_{\mathsf{out}}(h,r)\big), \notag
\end{equation}
where $\mathcal{C}_{\mathsf{in}}(d,h)$ and $\mathcal{C}_{\mathsf{out}}(h,r)$ are the input and readout \ac{MLP} costs, respectively, and $\xi=K$ for node-level readout and $\xi=1$ for graph-level readout after mean pooling. In each unfolded layer, assuming $N_{\St,1}=\dots=N_{\St,K}=N_{\St,\mathsf{user}}$, $\Psi_\mathsf{a}$ is node-level with output dimension $r_\mathsf{a}=2N_{\St,\mathsf{user}}^2$, whereas $\Psi_\mathsf{p}$ is graph-level with output dimension $r_\mathsf{p}=N_\Tt$. Hence, the additional \ac{GNN} cost per unfolded layer is
\begin{equation}
	\mathcal{C}_{\mathsf{GNN}}
	= \mathcal{C}_{\mathsf{GS}}(d,h_\mathsf{a},r_\mathsf{a},S_\mathsf{a})
	+ \mathcal{C}_{\mathsf{GS}}(d,h_\mathsf{p},r_\mathsf{p},S_\mathsf{p}).
\end{equation}
Thus, the \ac{GNN} cost grows linearly with $d=\mathcal{O}(MN_\Ct N_\Tt)$ through the input \acp{MLP}, quadratically with $K$ through the fully connected user graph, linearly with the number of \ac{GraphSAGE} layers, and polynomially with the hidden widths through the linear transformations and readout.

\subsubsection{Offline Training Complexity}

Let $|\boldsymbol{\mathcal{D}}|$ denote the number of training problem instances, $E$ the number of training epochs, $B$ the batch size, and $P_\Theta$ the number of trainable parameters. For each problem instance, a forward pass runs $L$ unfolded layers, where each layer consists of the model-based update with cost $\mathcal{C}_{\mathsf{iter}}$ evaluated in~\eqref{eq:Citer} and the trainable \ac{GNN} modules with cost $\mathcal{C}_{\mathsf{GNN}}$. The \ac{WSR}-based loss in~\eqref{eq:LTheta} is then backpropagated through the unfolded computation graph. Under standard reverse-mode automatic differentiation, as implemented in modern deep-learning platforms such as PyTorch~\cite{NEURIPS2019_bdbca288}, the backward pass has the same asymptotic order as the forward pass up to a constant factor. Hence, the offline training complexity can be expressed as
\begin{equation}
	\mathcal{O}\!\left(
	E|\boldsymbol{\mathcal{D}}|\kappa L
	(\mathcal{C}_{\mathsf{iter}}+\mathcal{C}_{\mathsf{GNN}})
	+ E\left\lceil\frac{|\boldsymbol{\mathcal{D}}|}{B}\right\rceil P_\Theta
	\right),
\end{equation}
where $\kappa$ is a constant accounting for the backward pass, and the second term represents the per-batch cost of updating the trainable network parameters, which scales linearly with $P_\Theta$ for standard optimizers such as SGD or Adam. 

\subsubsection{Online Inference Complexity}

During online inference, the network parameters are fixed and no gradient computation or optimizer update is performed. For one problem instance, the real-time cost is therefore
\begin{equation}
	\mathcal{O}\!\left(L(\mathcal{C}_{\mathsf{iter}}+\mathcal{C}_{\mathsf{GNN}})
	+\mathcal{C}_{\mathcal{P}2}\right),
\end{equation}
where $\mathcal{C}_{\mathcal{P}2}$ is the cost of the final \ac{RF}-baseband decomposition in Problem~\ref{P2}. The computational complexity of solving this decomposition problem has been extensively evaluated in the literature; an example can be found in~\cite{Jin2018Hybrid}. Compared with Alg.~\ref{ModelAlgo}, the unfolded solver adds the \ac{GNN} inference cost $\mathcal{C}_{\mathsf{GNN}}$ in each unfolded layer, but it typically uses a much smaller number of layers $L$. The reduction in the number of required iterations is the main source of the runtime savings.

\subsubsection{Additional Scalability Remarks}

Two additional scalability aspects are worth noting. First, the number of \ac{RF} chains $N_\mathsf{RF}$ does not enter the dimensions of $\Psi_\mathsf{a}$ or $\Psi_\mathsf{p}$, because the unfolded solver operates on the virtual fully digital precoder $\Fm_\Dt$ in Problem~\ref{P1}. Thus, changing $N_\mathsf{RF}$ only affects the final RF-baseband decomposition cost $\mathcal{C}_{\mathcal{P}2}$ and the achievable decomposition accuracy. Second, the \ac{GraphSAGE}-based architecture is inductive with respect to the user graph. Although changing $K$ modifies the number of graph nodes and edges, and thus the message-passing complexity, it does not alter the network architecture or require retraining.
}

\section{Numerical Results}

This section presents the simulation results. The simulation setup is described first, followed by the evaluation of the proposed method.

\subsection{Simulation Setup}\label{sec:SimSetUp}

\subsubsection{System Configuration}

\begin{table}[t]
\centering
\caption{Simulation System Parameters}
\label{tab:system_parameters}

\begin{threeparttable}
\begin{tabular}{l c}
\toprule
\textbf{Parameter} & \textbf{Value} \\
\midrule
Number of users $K$ & $3$--$5$ \\
Number of antennas per user $M_k$ & $4$ \\
Number of data streams per user $N_{\mathsf{S},k}$\qquad & $2$ \\
Number of RF chains $N_{\mathsf{RF}}$ & $10$ \\
Number of DMAs $N_{\mathsf{T}}$ & $20$ \\
Number of elements per DMA $N_{\mathsf{C}}$ & $5$ \\
Per-DMA power constraint $P_n$ & $\unit[1]{mW}$ \\
Carrier frequency & $\unit[28]{GHz}$ \\
Bandwidth & $\unit[20]{MHz}^\sharp$ \\
Noise power spectral density & $\unit[-174]{dBm/Hz}$ \\
\bottomrule
\end{tabular}

\begin{tablenotes}
\footnotesize
\item[$\sharp$] The mmWave band at $\unit[28]{GHz}$ can support wider bandwidths up to hundreds of MHz. Here, a $\unit[20]{MHz}$ bandwidth is adopted to be consistent with the narrowband assumption.
\end{tablenotes}

\end{threeparttable}
\end{table}

Unless otherwise specified, the system parameters are set as listed in Table~\ref{tab:system_parameters}. The elements in each \ac{DMA} are assumed to be linearly and uniformly distributed with half-wavelength spacing. The signal propagation inside the \ac{DMA} (i.e., the complex waveguide response) of the $m^\text{th}$ element of the $n^\text{th}$ \ac{DMA} is modeled as~\cite{Zhang2022Beam}
\begin{equation}
\eta_{n,m} = e^{-d_{n,m}(\varrho_m + \mathsf{j}\,\varpi_m)},
\end{equation}
where $d_{n,m}$ denotes the one-dimensional location of the $m^\text{th}$ element in the $n^\text{th}$ \ac{DMA}, with the feeding point taken as the origin. Here, $\varrho_m$ represents the waveguide attenuation coefficient and $\varpi_m$ denotes the signal wavenumber. Following~\cite{Zhang2022Beam}, we set $\varrho_m = \unit[0.6]{m^{-1}}$.

\subsubsection{Training Settings}\label{sec:TS}
The proposed \ac{GNN}-based unfolding network is implemented using the PyTorch platform~\cite{NEURIPS2019_bdbca288}. Throughout the evaluation, equal user weights are adopted, i.e., $\beta_1=\dots=\beta_K=1$. Consequently, the input feature of each user node in the \acp{GNN} $\Psi_\mathsf{a}$ and $\Psi_\mathsf{p}$ corresponds to a row of $\tilde{\Hm}$ defined in~\eqref{eq:Htilde}, which has a dimension of $2M_k N_\mathsf{C} N_\mathsf{T} + 1 = 801$. These features are first processed by a shared input \ac{MLP} to extract latent representations, then propagated through the multi-layer \ac{GraphSAGE} network, and finally mapped to the output via a shared readout \ac{MLP}, as illustrated in Fig.~\ref{fig:GNNs}. The number of layers and the width of each layer in the deep networks are summarized in Table~\ref{tab:network_architecture}. All networks employ the ReLU activation function. 

\begin{table}[t]
    \centering
    \caption{Network architecture of unfolding modules $\Psi_{\mathsf a}$ and $\Psi_{\mathsf p}$}
    \label{tab:network_architecture}
    \begin{tabular}{cccc}
        \toprule
        \textbf{Module} 
        & \textbf{Input MLP} 
        & \textbf{GNN layers} 
        & \textbf{Readout MLP} \\
        \midrule
        $\Psi_{\mathsf a}$ 
        & $801 \rightarrow 406 \rightarrow 12$ 
        & $12 \rightarrow 12 \rightarrow 12$ 
        & $12 \rightarrow 10 \rightarrow 8$ \\
        $\Psi_{\mathsf p}$ 
        & $801 \rightarrow 415 \rightarrow 30$ 
        & $30 \rightarrow 30 \rightarrow 30$ 
        & $30 \rightarrow 25 \rightarrow 20$ \\
        \bottomrule
    \end{tabular}
\end{table}

We adopt the DeepMIMO dataset~\cite{Alkhateeb2019} to train the proposed unfolding network. As an unsupervised learning approach, the training relies on a collection of unlabeled problem instances. As defined in Section~\ref{sec:LF}, each problem instance $\mathcal{H}$ comprises the channel matrices and noise levels of $K$ users. To enhance generalization, we allow the number of users to vary across instances, where $K$ is randomly generated within the range $3 \leq K \leq 5$, as also shown in Table~\ref{tab:system_parameters}. Specifically, we collect a total of 28,595 channel realizations from the New York, Los Angeles, and Chicago scenarios provided by DeepMIMO. These channel data are then randomly combined to construct 20,000 problem instances, forming the training dataset $\boldsymbol{\mathcal{D}}$. For evaluation, we use an independent test set consisting of 1,000 problem instances generated from the Miami scenario. This separation between training and testing data completely avoids data leakage and ensures a fair performance assessment. The network is trained using the Adam optimizer~\cite{Kingma2015Adam}, with a learning rate progressively decayed from $10^{-3}$ to $10^{-6}$. We further employ dropout~\cite{Srivastava2014Dropout} to enhance generalization capability, with a dropout rate of $10\%$.

\subsubsection{Evaluation Settings}
All runtime evaluations are conducted on a MacBook Pro (macOS 26.3) with an Apple M1 Pro chip. To better position the proposed method within the existing literature, we adopt two benchmark algorithms based on prior work. As the specific multiuser \ac{MIMO} \ac{DMA}-based tri-hybrid \ac{WSR} problem in~\eqref{P0} has not been directly studied in prior work, no off-the-shelf benchmark is available for exactly the same setting. Therefore, the adopted benchmarks are constructed by combining several existing methods developed for related problems, as detailed below:
\begin{itemize}
	\item \textbf{AO-MO:} This is an algorithm originally proposed in~\cite{Kimaryo2023Downlink} to address the multiuser downlink beamforming problem with \acp{DMA}. It adopts an alternating optimization framework that iteratively applies the \ac{WMMSE} and Riemannian manifold optimization algorithms to optimize $\Fm_\At$ and $\Fm_\Dt$. Since the original method in~\cite{Kimaryo2023Downlink} only designs $\Fm_\At$ and $\Fm_\Dt$, we introduce an additional step that decomposes the resulting fully digital precoder $\Fm_\Dt$ into $\Fm_{\mathsf{RF}}$ and $\Fm_{\mathsf{BB}}$ using the algorithm in~\cite{Jin2018Hybrid}, thereby adapting it to our tri-hybrid beamforming setting. The manifold optimization steps are implemented using the Pymanopt toolbox~\cite{JMLR:v17:16-177}. 
	\item \textbf{Mat. Decomp.:} Another approach is based on matrix decomposition, which, as discussed in Section~\ref{sec:PF}, is widely used in conventional hybrid digital-analog beamforming~\cite{Yu2016Alternating,Ni2017Near,Jin2018Hybrid} due to its reduced complexity. To the best of our knowledge, this framework has not been extended to tri-hybrid \ac{MIMO}. Here, we adapt it to the tri-hybrid setting and employ it as a second benchmark. Specifically, we first optimize a virtual precoder $\tilde{\Fm} = \Fm_\At \Fm_\mathsf{RF} \Fm_\mathsf{BB}$  subject only to a total power constraint $\|\tilde{\Fm}\|_\Ft^2 \leq \sum_{n=1}^{N_\Tt} P_n$, which can be efficiently solved using, e.g., the algorithm in~\cite{Zhao2023Rethinking}. Let $\tilde{\Fm}^\mathsf{opt}$ denote the resulting solution. We then decompose $\tilde{\Fm}^\mathsf{opt}$ into $\Fm_\At$ and $\Fm_\Dt$ by solving
		\begin{equation}
		\begin{aligned}
		\min_{\Fm_\At,\Fm_\Dt}\ & \|\tilde{\Fm}^\mathsf{opt}-\Fm_\At\Fm_\Dt\|_\Ft^2\\
		\mathrm{s.t.}\ & \big[\Fm_\mathsf{D} \Fm_{\mathsf{D}}^\HH \big]_{n,n} \leq P_n,\ \forall n,\\
		& |[\uv_n]_m|^2 = 1,\ \forall n,m,
		\end{aligned}
		\end{equation}
		which can be efficiently handled via alternating optimization on Riemannian manifolds~\cite{JMLR:v17:16-177}, but with a substantially simpler objective function. Finally, $\Fm_\Dt$ is factorized into $\{\Fm_\mathsf{RF},\Fm_\mathsf{BB}\}$ by solving Problem~\ref{P2} using the method presented in~\cite{Jin2018Hybrid}.
\end{itemize}

{
\subsection{Model-Based Algorithm Performance}

Before evaluating the unfolded solver, this subsection first assesses the performance of the developed Alg.~\ref{ModelAlgo}. Figure~\ref{fig_twoStage} is presented to quantify the performance loss introduced by the second-stage RF-baseband decomposition. The two-stage curves evaluate the realizable precoders obtained after approximating $\Fm_\Dt$ by $\Fm_\mathsf{RF}\Fm_\mathsf{BB}$ for different numbers of RF chains, while the virtual fully-digital curve serves as a reference. For most input-power levels and numbers of \ac{RF} chains, we observe that the adopted RF-baseband decomposition introduces only a negligible performance loss compared to the virtual fully-digital reference, indicating its effectiveness. Moreover, increasing $N_\mathsf{RF}$ enlarges the feasible representation space of the RF-baseband decomposition. In the low-power regime, the two two-stage curves ($N_\mathsf{RF}=8$ and $N_\mathsf{RF}=6$) nearly overlap. In the high-power regime, however, the impact of the decomposition becomes more pronounced. The $N_\mathsf{RF}=8$ curve remains close to the fully-digital reference, whereas the $N_\mathsf{RF}=6$ curve exhibits an increasing gap as the input-power budget grows. This behavior indicates that the RF-baseband decomposition error can become a dominant limiting factor when the RF-chain dimension is restrictive and the per-\ac{DMA} input-power constraint is high.

\begin{figure}[t]
    \centering
    \include{figs/fig_twoStage.tex}
    \vspace{-2.5em}
    \caption{{WSR versus per-DMA input power for the two-stage model-based solution and the direct tri-domain alternating-optimization baseline with $K=3$ users and two streams per user. The virtual curve is obtained by solving Problem~\ref{P1} before RF-baseband decomposition, while the two-stage curves report the WSR after decomposing $\Fm_\Dt$ into $\{\Fm_\mathsf{RF},\Fm_\mathsf{BB}\}$ for different $N_\mathsf{RF}$. The direct baseline alternates over $\{\Fm_\mathsf{A},\Fm_\mathsf{RF},\Fm_\mathsf{BB}\}$.}}
    \label{fig_twoStage}
\end{figure}

To further understand the performance of the two-stage decomposition, we additionally consider a direct tri-domain alternating optimization (AO) baseline. This is a purpose-built reference implementation rather than a standard benchmark established in the literature. This baseline does not introduce the virtual precoder $\Fm_\Dt$; instead, after updating the \ac{WMMSE} auxiliary variables, it directly alternates over $\Fm_\mathsf{BB}$, $\Fm_\mathsf{RF}$, and $\Fm_\mathsf{A}$, where $\Fm_\mathsf{RF}$ and $\Fm_\mathsf{A}$ are updated through projected gradient-type steps that enforce the corresponding hardware constraints. To reduce initialization bias, all methods use three random starts and retain the best solution; the direct AO baseline also uses the previous power-point solution as a warm start. As shown in Fig.~\ref{fig_twoStage}, when $N_\mathsf{RF}=8$, the two-stage design outperforms the direct AO baseline over the entire tested power range. When $N_\mathsf{RF}=6$, the direct AO baseline exceeds the two-stage curve only in a limited high-power region, which is consistent with the amplified decomposition loss under a stringent RF-chain dimension. These results suggest that directly optimizing the three precoders, although avoiding a matrix-decomposition step, faces a more difficult nonconvex problem with strongly coupled variables and can easily converge to poor stationary points. In contrast, the two-stage formulation provides an easier optimization route that works very well in practice. Since the direct AO baseline is introduced only to examine the effect of the two-stage decomposition and does not offer superior overall performance, it is not included as a standard benchmark in the subsequent evaluations.

}

\subsection{Unfolded Model Training and Inference}

\begin{figure}[t]
    \centering
    \include{figs/fig_BatchLoss.tex}
    \vspace{-2.5em}
    \caption{Training loss versus batch index. Each batch contains 50 problem instances, and the training loss is the negative mean \ac{WSR} over a batch.}
    \label{fig_BatchLoss}
\end{figure}

In this subsection, we assess the proposed unfolded model. We first evaluate the training performance by examining the convergence behavior of the training loss. Figure~\ref{fig_BatchLoss} compares the proposed weight initialization strategy introduced in Section~\ref{sec:MWI}, namely zero initialization of the output-layer weights, with its nonzero random initialization counterpart. All remaining network weights are initialized using Kaiming initialization~\cite{He_2015_ICCV}. As described in Section~\ref{sec:LF}, the training loss is the negative mean \ac{WSR} over each batch. Here, each batch contains 50 problem instances. We also evaluate the average optimized \ac{WSR} over the same batch using Alg.~\ref{ModelAlgo} with the same layer number $L$ and plot its negative value in Fig.~\ref{fig_BatchLoss} as a reference. Only when the training loss falls below this level does the proposed deep learning model achieve performance improvement. Both the unfolded network and Alg.~\ref{ModelAlgo} use a fixed number of iterations $L=3$. The results validate the analysis in Section~\ref{sec:MWI}, showing that zero initialization of the output-layer weights provides the unfolded algorithm with a high-quality starting point that matches the performance of the original algorithm. In contrast, random initialization yields a random starting point that is worse than the original solution, thereby making training harder. This zero-initialization design is critical for training the unfolded algorithm, as it ensures that learning improves upon a high-quality baseline instead of optimizing from a random point.

\begin{figure}[t]
    \centering
    \include{figs/fig_CDFs.tex}
    \vspace{-2.5em}
    \caption{CDFs of the optimized \ac{WSR} on the test set with 1,000 problem instances. Alg.~\ref{ModelAlgo} with $L=\{4,10,50,200\}$ is compared with its unfolded version using $L=\{2,3,4\}$ iterations {as well as a direct GNN benchmark. The unfolded Alg.~\ref{ModelAlgo} with $L=4$ and the direct GNN benchmark contain 2,734,332 and 2,739,190 trainable parameters, respectively.}}
    \label{fig_CDFs}
\end{figure}

Next, we evaluate the acceleration performance of the proposed unfolding approach. Figure~\ref{fig_CDFs} shows the \ac{CDF} of the optimized \ac{WSR} over 1,000 problem instances in the test set. Alg.~\ref{ModelAlgo} with $L=\{4,10,50,200\}$ iterations is compared with its unfolded counterpart using $L=\{2,3,4\}$ iterations. The results indicate that, after proper training, the unfolded algorithm can substantially reduce the required number of iterations without compromising performance. For example, the unfolded algorithm with only two iterations achieves performance comparable to that of Alg.~\ref{ModelAlgo} after 50 iterations, while the four-layer unfolded algorithm attains performance similar to that of Alg.~\ref{ModelAlgo} after 200 iterations. These results demonstrate that the proposed deep unfolding architecture effectively learns to accelerate the original iterative procedure, resulting in substantial computational savings. {Importantly, this performance gain arises from the integration of model-based optimization and data-driven learning, rather than from deep learning alone. To verify this, we further implement a direct \ac{GNN} benchmark that learns the tri-domain precoders directly from channel features without exploiting the model-based algorithm. As shown in Fig.~\ref{fig_CDFs}, even though this benchmark has slightly more trainable parameters than the proposed unfolded solver with $L=4$ and is trained using the same dataset, it performs significantly worse. This highlights the advantage of the proposed model-based deep unfolding framework over a purely data-driven approach.}

\begin{figure}[t]
    \centering
    \include{figs/fig_WSRvsRuntime.tex}
    \vspace{-2.5em}
    \caption{Achieved WSR versus runtime over 16 sample problem instances.}
    \label{fig_WSRvsRuntime}
\end{figure} 

\begin{table}[t]
    \centering
    \setlength{\tabcolsep}{5pt}
    \caption{Achieved WSR versus runtime for different algorithms. 
    \\The values are reported as mean $\pm$ standard deviation.}
    \label{tab:runtime_wsr_comparison}
    \begin{tabular}{lcc}
        \toprule
        \textbf{Algorithm} 
        & \textbf{Runtime [s]} 
        & \textbf{WSR [bps/Hz]} \\
        \midrule
        Alg.~\ref{ModelAlgo} ($L=200$)
        & $6.342 \pm 1.522$
        & $17.960 \pm 7.057$ \\
        Unfolded Alg.~\ref{ModelAlgo} ($L=2$)
        & $0.192 \pm 0.034$
        & $14.675 \pm 5.806$ \\
        Unfolded Alg.~\ref{ModelAlgo} ($L=4$)
        & $0.300 \pm 0.059$
        & $17.868 \pm 7.681$ \\
        Unfolded Alg.~\ref{ModelAlgo} ($L=6$)
        & $0.387 \pm 0.058$
        & $18.373 \pm 8.204$  \\
        \midrule
        Benchmark: AO-MO
        & $4.231 \pm 2.387$
        & $15.450 \pm 7.122$ \\
        Benchmark: Mat. Decomp.
        & $0.271 \pm 0.034$
        & $11.174 \pm 4.947$ \\
        \bottomrule
    \end{tabular}
\end{table}

We further evaluate the runtime of the proposed algorithms to quantify their computational efficiency. We compare it with the two benchmark approaches to position our method within the literature. Figure~\ref{fig_WSRvsRuntime} plots the achieved \ac{WSR} versus the measured runtime over 16 sample problem instances, and the corresponding quantitative results are summarized in Table~\ref{tab:runtime_wsr_comparison}. The results clearly show that the unfolding design can reduce the actual runtime by more than an order of magnitude without sacrificing the achieved \ac{WSR}. In particular, increasing the number of unfolded layers leads to higher \ac{WSR} but also longer runtime, indicating a trade-off between optimization performance and execution time. Moreover, compared with the two benchmark approaches, the proposed unfolded algorithm stands out by achieving high \ac{WSR} at low runtime, which is a performance regime unattained by the two benchmarks. 

\subsection{Scalability, Robustness, and Generalization}

\begin{figure}[t]
    \centering
    \include{figs/fig_WSRvsL.tex}
    \vspace{-2.5em}
    \caption{Achieved WSR versus number of unfolding layers.}
    \label{fig_WSRvsL}
\end{figure} 

Figure~\ref{fig_WSRvsL} presents the achieved \ac{WSR} versus the number of unfolding layers. Increasing the number of layers improves performance, but saturation occurs for $L>5$. Since the execution time increases linearly with the number of unfolding layers, this indicates a trade-off between performance and computational efficiency. {
We further evaluate the scalability of the proposed method \ac{w.r.t.} the array configuration, including the number of \acp{DMA} $N_\Tt$ and the number of elements per \ac{DMA} $N_\Ct$. Since the \ac{CSI} dimension in the \ac{GNN} modules depends on both $N_\Tt$ and $N_\Ct$, the unfolded networks are trained separately for each tested array configuration. The resulting WSR and runtime are summarized in Table~\ref{tab:scalability_nt}. The results show that the proposed unfolding strategy remains effective as the array configuration scales up. In general, larger array configurations lead to higher \ac{WSR} and longer runtimes for all the tested methods. 
}

\begin{table*}[t]
    \centering
    \setlength{\tabcolsep}{5pt}
    \renewcommand{\arraystretch}{1.2}
    \caption{Scalability comparison w.r.t. $N_\Tt$ and $N_\Ct$. Each entry reports the average WSR and runtime over 128 test problem instances.}
    \label{tab:scalability_nt}
    \scriptsize
    \begin{tabular}{c c c c c c c c}
        \toprule
        $N_\Tt$
        & $N_\Ct$ 
        & \makecell{Unfolded Alg.~\ref{ModelAlgo}\\($L$=2)} 
        & \makecell{Unfolded Alg.~\ref{ModelAlgo}\\($L$=4)}
        & \makecell{Unfolded Alg.~\ref{ModelAlgo}\\($L$=6)}
        & \makecell{Alg.~\ref{ModelAlgo}\\($L$=200)}
        & \makecell{Benchmark:\\AO-MO}
        & \makecell{Benchmark:\\Mat. Decomp.} \\
        \midrule
        10 & 5 & 10.01 bps/Hz, 0.049 s & 11.85 bps/Hz, 0.061 s & 12.52 bps/Hz, 0.083 s & 12.96 bps/Hz, 2.159 s & 10.57 bps/Hz, 2.382 s & 7.27 bps/Hz, 0.106 s \\
        20 & 5 & 13.35 bps/Hz, 0.178 s & 16.07 bps/Hz, 0.238 s & 16.12 bps/Hz, 0.307 s & 16.24 bps/Hz, 6.643 s & 13.44 bps/Hz, 4.339 s & 11.00 bps/Hz, 0.326 s \\
        30 & 5 & 15.26 bps/Hz, 0.292 s & 19.70 bps/Hz, 0.422 s & 20.77 bps/Hz, 0.561 s & 18.99 bps/Hz, 14.045 s & 16.42 bps/Hz, 7.320 s & 13.62 bps/Hz, 0.574 s \\
        10 & 10 & 10.45 bps/Hz, 0.039 s & 13.01 bps/Hz, 0.061 s & 14.71 bps/Hz, 0.084 s & 12.97 bps/Hz, 2.119 s & 13.43 bps/Hz, 2.969 s & 7.34 bps/Hz, 0.221 s \\
        10 & 15 & 12.91 bps/Hz, 0.041 s & 13.82 bps/Hz, 0.064 s & 17.64 bps/Hz, 0.089 s & 15.90 bps/Hz, 2.224 s & 15.93 bps/Hz, 4.141 s & 7.54 bps/Hz, 0.334 s \\
        \bottomrule
    \end{tabular}
\end{table*}

\begin{figure}[t]
    \centering
    \include{figs/fig_WSRvsNMSE.tex}
    \vspace{-2.5em}
    \caption{Achieved WSR evaluated using noisy CSI. The curves show the mean over the test set, and the blue and orange shaded regions denote the standard deviation for the unfolded Alg.~1 with $L=2$ and $L=4$, respectively.}
    \label{fig_WSRvsNMSE}
\end{figure} 

A practical concern for the proposed method is its robustness to imperfect \ac{CSI}, as obtaining accurate \ac{CSI} remains a major challenge in real-world \ac{MIMO} systems, particularly in massive \ac{MIMO}~\cite{Ke2020Compressive}. In Fig.~\ref{fig_WSRvsNMSE}, we evaluate the achieved \ac{WSR} under noisy \ac{CSI} using Alg.~\ref{ModelAlgo} and its unfolded version. The noisy \ac{CSI} is generated as \(\hat{\Hm}_k=\Hm_k+\Em_k,\ \forall k\), where \(\Em_k\) is a full matrix whose entries are \ac{i.i.d.} complex zero-mean Gaussian random variables, with power adjusted to achieve different \ac{NMSE} levels. The \ac{CSI} \ac{NMSE} is defined as $\|\Em_k\|_\Ft^2/\|\Hm_k\|_\Ft^2$. The results show that, for both the original iterative Alg.~\ref{ModelAlgo} and its unfolded version, increasing the number of iterations reduces robustness to \ac{CSI} errors. This effect is more pronounced in the unfolded algorithm. For example, the unfolded algorithm with $L=2$ exhibits robustness comparable to that of the original Alg.~\ref{ModelAlgo} with $L=200$, whereas the robustness of the unfolded algorithm with $L=4$ degrades substantially. These findings indicate that the proposed deep unfolding method is more effective when accurate \ac{CSI} is available. Nevertheless, the performance degradation due to inaccurate \ac{CSI} remains limited, while the advantage of higher computational efficiency is consistently maintained.

\begin{figure}[t]
    \centering
    \include{figs/fig_Generalization.tex}
    \vspace{-2.5em}
    \caption{Achieved WSR evaluated using channel datasets from different cities generated by the DeepMIMO platform.}
    \label{fig_Generalization}
\end{figure} 

Finally, we assess the generalization capability of the proposed method. In Fig.~\ref{fig_Generalization}, we compare the proposed Alg.~\ref{ModelAlgo}, its unfolded version, and the two benchmark schemes on channel datasets generated from Miami (the predefined test set) and three additional cities, namely Austin, Columbus, and Dallas, within the DeepMIMO platform. All these datasets are different from those used for training, as described in Section~\ref{sec:TS}.  The results clearly show that both the proposed Alg.~\ref{ModelAlgo} and its unfolded version consistently achieve strong performance and outperform the benchmarks across all city scenarios, demonstrating robust generalization capability. Although Alg.~\ref{ModelAlgo} also achieves high WSR, the significantly reduced runtime makes the unfolded Alg.~\ref{ModelAlgo} a more practical solution for real-time deployment.

\section{Limitations}

Although the effectiveness of the proposed deep unfolding method has been demonstrated for tri-hybrid beamforming, several limitations should be acknowledged. First, the method requires offline training before deployment. Furthermore, for tri-hybrid beamforming systems employing different antenna technologies or system architectures, the unfolding network must be redesigned and retrained accordingly. No unified solution currently exists for all tri-hybrid beamforming scenarios.

{In addition, while this paper focuses on a specific \ac{DMA}-based tri-hybrid architecture, the proposed algorithm may experience performance degradation in practical implementations due to the omission of certain realistic hardware considerations. For example, practical \acp{DMA} exhibit frequency-selective responses, making wideband effects important, particularly in high-frequency communications~\cite{Carlson2026Wideband,Deshpande2026Frequency}. Moreover, \ac{DMA} reconfiguration is typically achieved by tuning the resonant characteristics (e.g., the resonant frequency or effective impedance) of individual elements using components such as varactors or PIN diodes~\cite{PhysRevApplied}. Consequently, the achievable element responses are often confined to a limited feasible region or finite-resolution tuning, rather than allowing arbitrary phase-shift control. Incorporating these practical hardware constraints into the algorithm design is an important direction for future research.
}

\section{Conclusion}

While tri-hybrid MIMO architectures offer significant potential for flexible and efficient communications, their multi-domain structure poses substantial challenges for beamforming optimization. The strong coupling among design variables and hardware constraints makes the problem inherently complex and computationally intensive. By studying a representative multiuser tri-hybrid system employing dynamic metasurface antennas, this paper shows that deep algorithm unfolding provides an effective means of reducing this computational burden, thereby facilitating real-time implementation. The effectiveness of the proposed approach relies on carefully designed trainable networks that integrate model-based optimization with data-driven learning. {Moreover, this study reveals a conceptual shift in algorithm design under the deep unfolding framework, where the asymptotic convergence behavior of the original iterative solver becomes less critical, while low per-iteration complexity and an iteration structure amenable to unfolding emerge as more important design considerations for developing a fast solver.}

{\appendix
\renewcommand{\theequation}{A.\arabic{equation}}
\setcounter{equation}{0}
\section{Proof of \eqref{eq:un_upd}}
{
In this proof, we omit the \ac{DMA} index $n$ for brevity. Let $\uv^{(t)}$ denote a feasible point of $\uv_n$ in~\eqref{eq:optun}, and write the objective function as
\begin{equation}
	f(\uv)=\uv^\HH \Mm \uv + 2\mathrm{Re}\{(\bv/2)^\HH \uv\}.
\end{equation}
This is a unit-modulus quadratic program in the form considered in~\cite{He2022QCQP}. Applying the quadratic majorization in~\cite[Eqs.~(2)--(3)]{He2022QCQP} with a $\rho\geq\lambda_\mathsf{max}(\Mm)$ gives the following linear surrogate over $|[\uv]_m|=1$:
\begin{multline}
	g\big(\uv|\uv^{(t)}\big)
	=
	2\mathrm{Re}\Big\{\uv^\HH\Big((\Mm-\rho\mathbf{I})\uv^{(t)}+\frac{1}{2}\bv\Big)\Big\}
	\\ +2\rho N_\Ct-\big(\uv^{(t)}\big)^\HH\Mm\uv^{(t)}.
\end{multline}
The constant-modulus minimizer of this surrogate follows from the phase-projection solution in~\cite[eq.~(6)]{He2022QCQP}, namely
\begin{equation}
	\uv^{(t+1)}
	=
	e^{\mathsf{j}\arg\big((\rho \mathbf{I} - \Mm)\uv^{(t)} - \frac{1}{2}\bv\big)}.
\end{equation}
Therefore, \eqref{eq:un_upd} is one majorization-minimization update for~\eqref{eq:optun}. Since $g(\uv|\uv^{(t)})$ upper-bounds $f(\uv)$ and is tight at $\uv^{(t)}$, the standard majorization-minimization descent property~\cite[Eqs.~(2)--(5)]{Breloy2021Majorization} guarantees
\begin{equation}
	f\big(\uv^{(t+1)}\big) \leq g\big(\uv^{(t+1)}|\uv^{(t)}\big) \leq g\big(\uv^{(t)}|\uv^{(t)}\big) = f\big(\uv^{(t)}\big).\notag
\end{equation}
This completes the proof.
}
}

\section{Acknowledgment}
The authors would like to thank Prof.~Robert~W.~Heath~Jr.\ for his helpful comments on the system modeling.

\bibliography{references}
\bibliographystyle{IEEEtran}

\end{document}

%% file: preamble.tex
\usepackage{amsmath,amssymb,amsfonts}
\usepackage{array}
\usepackage[caption=false,font=footnotesize,labelfont=sf,textfont=sf]{subfig}
\usepackage{textcomp}
\usepackage{stfloats}
\usepackage{verbatim}
\usepackage{graphicx}
\usepackage{cite}
\usepackage{tabularx}
\usepackage{booktabs}

\def\BibTeX{{\rm B\kern-.05em{\sc i\kern-.025em b}\kern-.08em
    T\kern-.1667em\lower.7ex\hbox{E}\kern-.125emX}}

\usepackage{acronym}
\usepackage{acronym}
\acrodef{5G}{the fifth generation}
\acrodef{MIMO}{multiple-input multiple-output}
\acrodef{SISO}{single-input single-output}
\acrodef{MISO}{multiple-input single-output}
\acrodef{MU}{multiuser}
\acrodef{RF}{radio frequency}
\acrodef{LoS}{line-of-sight}
\acrodef{NLoS}{non-line-of-sight}
\acrodef{AoA}{angle-of-arrival}
\acrodef{AoD}{angle-of-departure}
\acrodef{UPA}{uniform planar array}
\acrodef{ARV}{array response vector}
\acrodef{EM}{electromagnetic}
\acrodef{MA}{movable antenna}
\acrodef{BS}{base station}
\acrodef{UE}{user equipment}
\acrodef{ERA}{electromagnetically reconfigurable antenna}
\acrodef{AWGN}{additive white Gaussian noise}
\acrodef{HFSS}{High-Frequency Structure Simulator}
\acrodef{BCD}{block coordinate descent}
\acrodef{ZF}{zero-forcing}
\acrodef{SNR}{Signal-to-Noise Ratio}
\acrodef{6G}{sixth generation}
\acrodef{3D}{three-dimensional}
\acrodef{2D}{two-dimensional}
\acrodef{RIS}{reconfigurable intelligent surface}
\acrodef{DMA}{dynamic metasurface antenna}
\acrodef{PA}{power amplifier}
\acrodef{WMMSE}{weighted sum-minimum mean-square error}
\acrodef{WSR}{weighted sum-rate}
\acrodef{GNN}{graph neural network}
\acrodef{GraphSAGE}{graph sample and aggregate}
\acrodef{w.r.t.}{with respect to}
\acrodef{MLP}{multi-layer perceptron}
\acrodef{CDF}{cumulative distribution function}
\acrodef{CSI}{channel state information}
\acrodef{i.i.d.}{independent and identically distributed}
\acrodef{NMSE}{normalized mean squared error}
\newtheorem{lemma}{\textbf{Lemma}}

\newenvironment{proof}{\textit{\textbf{Proof:}}}{\hfill$\square$}
\usepackage{color} 
\usepackage[normalem]{ulem}

\usepackage{tikz} 
\usepackage[utf8]{inputenc}
\usepackage{pgfplots}
\usepackage{tabu,longtable}
\usepackage{diagbox}
\usepackage{threeparttable}
\usepackage{makecell}
\usepackage{multirow} 
\usepackage{units} 		
\usepackage{bm} 		
\usepackage{amssymb} 	
\input{symbol}

\usepackage{hyperref} 	
\usepackage{mathrsfs}   
\usepackage{amsmath}

\usepackage{algorithm}
\usepackage{algpseudocode}

\makeatletter
\algnewcommand{\LineComment}[1]{\Statex \hskip\ALG@thistlm \(\triangleright\) #1}
\makeatother

\usepackage{pgfplots}       
\pgfplotsset{compat=1.18}   
\usepackage{xcolor}         
\usepgfplotslibrary{fillbetween}
\usetikzlibrary{patterns}
\definecolor{profBlue}{RGB}{0,91,150}        
\definecolor{profCyan}{RGB}{0,150,170}       
\definecolor{profPurple}{RGB}{120,60,180}    
\definecolor{profOrange}{RGB}{200,90,10}     
\definecolor{profGray}{RGB}{90,90,90}        
\definecolor{profGreen}{RGB}{0,130,90}       
\definecolor{profRed}{RGB}{180,60,50}        

\definecolor{tabblue}{RGB}{31,119,180}
\definecolor{taborange}{RGB}{255,127,14}
\definecolor{tabgreen}{RGB}{44,160,44}
\definecolor{tabred}{RGB}{214,39,40}

%% file: symbol.tex
\newcommand{\TT}{\mathsf{T}}
\newcommand{\HH}{\mathsf{H}}

\newcommand{\av}{{\bf a}}
\newcommand{\bv}{{\bf b}}

\newcommand{\dv}{{\bf d}}

\newcommand{\nv}{{\bf n}}

\newcommand{\sv}{{\bf s}}

\newcommand{\uv}{{\bf u}}
\newcommand{\wv}{{\bf w}}
\newcommand{\vv}{{\bf v}}

\newcommand{\yv}{{\bf y}}
\newcommand{\zv}{{\bf z}}

\newcommand{\Am}{{\bf A}}
\newcommand{\Bm}{{\bf B}}

\newcommand{\Dm}{{\bf D}}
\newcommand{\Em}{{\bf E}}
\newcommand{\Fm}{{\bf F}}
\newcommand{\Gm}{{\bf G}}
\newcommand{\Hm}{{\bf H}}

\newcommand{\Mm}{{\bf M}}
\newcommand{\Nm}{{\bf N}}

\newcommand{\Qm}{{\bf Q}}

\newcommand{\Wm}{{\bf W}}

\newcommand{\Xm}{{\bf X}}
\newcommand{\Ym}{{\bf Y}}

\newcommand{\At}{{\mathsf A}}

\newcommand{\Ct}{{\mathsf C}}
\newcommand{\Dt}{{\mathsf D}}

\newcommand{\Ft}{{\mathsf F}}

\newcommand{\St}{{\mathsf S}}
\newcommand{\Tt}{{\mathsf T}}

\newcommand{\betav}{\ensuremath{\bm{\beta}}}

\newcommand{\etav}{\ensuremath{\bm{\eta}}}

\newcommand{\thetav}{\ensuremath{\bm{\theta}}}

\newcommand{\Gammam}{\ensuremath{\bm{\Gamma}}}

\newcommand{\Thetam}{\ensuremath{\bm{\Theta}}}
\newcommand{\Omegam}{\ensuremath{\bm{\Omega}}}

%% file: figs/fig_P1.tex
\renewcommand{\sfdefault}{cmbr} 
\newcommand{\ttk}[1]{\footnotesize\textsf{#1}} 

\pgfplotsset{
    every axis/.append style={
        label style={font=\sffamily\footnotesize},      
        legend style={font=\sffamily\footnotesize},     
        tick label style={font=\sffamily\footnotesize}  
    }
}

\begin{tikzpicture}
\begin{axis}[%
    width=2.65in, height=1.45in, scale only axis,
    xmin=-1, xmax=201, ymin=0, ymax=29,
    xlabel={Iteration $L$}, 
    ylabel={Weighted Sum-Rate [bps/Hz]},
    xtick={0,20,40,60,80,100,120,140,160,180,200},
    ytick={0,5,10,15,20,25},
    xticklabels={\ttk{0},\ttk{20},\ttk{40},\ttk{60},\ttk{80},\ttk{100},\ttk{120},\ttk{140},\ttk{160},\ttk{180},\ttk{200}},
    yticklabels={\ttk{0},\ttk{5},\ttk{10},\ttk{15},\ttk{20},\ttk{25}},
    axis line style={line width=0.7pt, draw=black!40},
    axis background/.style={fill=white},
    xmajorgrids=true,
    ymajorgrids=true,
    grid style={dashed, line width=0.5pt, draw=black!12},
    legend style={
        at={(1,1)},
        anchor=north east,
        legend cell align=left,
        align=left,
        fill=white,
        draw=black!40,
        line width=0.7pt,
        font=\sffamily\scriptsize}
]

\addplot[
    line width=1.2pt,
    color=profBlue
]table[row sep=crcr]{%
0	0.499788\\
1	7.768129\\
2	8.152795\\
3	8.485914\\
4	8.783383\\
5	9.052294\\
6	9.297714\\
7	9.523500\\
8	9.732557\\
9	9.927252\\
10	10.109440\\
11	10.280649\\
12	10.442153\\
13	10.594989\\
14	10.740068\\
15	10.878162\\
16	11.009984\\
17	11.136257\\
18	11.257990\\
19	11.377235\\
20	11.499325\\
21	11.639288\\
22	11.837286\\
23	12.176438\\
24	12.731006\\
25	13.412243\\
26	14.034561\\
27	14.528696\\
28	14.923336\\
29	15.258572\\
30	15.552918\\
31	15.815597\\
32	16.055103\\
33	16.275856\\
34	16.481035\\
35	16.673088\\
36	16.853874\\
37	17.024952\\
38	17.187708\\
39	17.342760\\
40	17.490919\\
41	17.632942\\
42	17.769369\\
43	17.900570\\
44	18.026974\\
45	18.149073\\
46	18.267292\\
47	18.381857\\
48	18.493095\\
49	18.601137\\
50	18.706276\\
51	18.808664\\
52	18.908445\\
53	19.005793\\
54	19.100183\\
55	19.192291\\
56	19.282372\\
57	19.370405\\
58	19.456608\\
59	19.540993\\
60	19.623652\\
61	19.704681\\
62	19.784140\\
63	19.862061\\
64	19.938566\\
65	20.013615\\
66	20.087341\\
67	20.159767\\
68	20.230972\\
69	20.300934\\
70	20.369713\\
71	20.437363\\
72	20.503918\\
73	20.569447\\
74	20.633892\\
75	20.697365\\
76	20.759907\\
77	20.821442\\
78	20.882065\\
79	20.941814\\
80	21.000671\\
81	21.058699\\
82	21.115919\\
83	21.172293\\
84	21.227922\\
85	21.282688\\
86	21.336794\\
87	21.390144\\
88	21.442751\\
89	21.494652\\
90	21.545858\\
91	21.596401\\
92	21.646244\\
93	21.695465\\
94	21.744068\\
95	21.792011\\
96	21.839346\\
97	21.886063\\
98	21.932066\\
99	21.977177\\
100	22.021795\\
101	22.065788\\
102	22.109226\\
103	22.152079\\
104	22.194370\\
105	22.236191\\
106	22.277397\\
107	22.318130\\
108	22.358248\\
109	22.397947\\
110	22.437109\\
111	22.475805\\
112	22.514009\\
113	22.551737\\
114	22.589006\\
115	22.625843\\
116	22.662233\\
117	22.698128\\
118	22.733662\\
119	22.768726\\
120	22.803455\\
121	22.837706\\
122	22.871548\\
123	22.905020\\
124	22.938114\\
125	22.970907\\
126	23.003254\\
127	23.035217\\
128	23.066935\\
129	23.098213\\
130	23.129135\\
131	23.159760\\
132	23.190056\\
133	23.220009\\
134	23.249638\\
135	23.278999\\
136	23.308002\\
137	23.336626\\
138	23.365078\\
139	23.393154\\
140	23.420956\\
141	23.448404\\
142	23.475643\\
143	23.502491\\
144	23.529118\\
145	23.555458\\
146	23.581461\\
147	23.607235\\
148	23.632694\\
149	23.657856\\
150	23.682831\\
151	23.707413\\
152	23.731857\\
153	23.755936\\
154	23.779713\\
155	23.803356\\
156	23.826603\\
157	23.849636\\
158	23.872419\\
159	23.894884\\
160	23.917200\\
161	23.939220\\
162	23.960964\\
163	23.982431\\
164	24.003693\\
165	24.024752\\
166	24.045410\\
167	24.066053\\
168	24.086224\\
169	24.106356\\
170	24.126146\\
171	24.145763\\
172	24.165127\\
173	24.184301\\
174	24.203243\\
175	24.221973\\
176	24.240379\\
177	24.258678\\
178	24.276768\\
179	24.294634\\
180	24.312374\\
181	24.329857\\
182	24.347183\\
183	24.364246\\
184	24.381176\\
185	24.397955\\
186	24.414511\\
187	24.430817\\
188	24.447016\\
189	24.463047\\
190	24.478966\\
191	24.494640\\
192	24.510197\\
193	24.525537\\
194	24.540842\\
195	24.555826\\
196	24.570744\\
197	24.585533\\
198	24.600128\\
199	24.614763\\
200	24.628988\\
};

\def\ymaxval{24.628988}

\addplot[
    dashed,
    line width=1pt,
    color=black!60,
    domain=-1:201
] {\ymaxval};

\node[anchor=south, yshift=0pt, font=\sffamily\scriptsize]
at (axis cs:42, 24.5) {Max = 24.63 bps/Hz};

\draw[->, thick, color=black!60, line width=1pt] 
  (axis cs:40,17.490919-0.5) -- ++(axis direction cs:0,-7);
\node[anchor=north, font=\sffamily\scriptsize] 
  at (axis cs:40,17.490919-8.2) {1.07 s};

\draw[->, thick, color=black!60, line width=1pt] 
  (axis cs:100,22.021795-0.5) -- ++(axis direction cs:0,-7);
\node[anchor=north, font=\sffamily\scriptsize] 
  at (axis cs:100,22.021795-8.2) {2.62 s};

\draw[->, thick, color=black!60, line width=1pt] 
  (axis cs:160,23.917200-0.5) -- ++(axis direction cs:0,-7);
\node[anchor=north, font=\sffamily\scriptsize] 
  at (axis cs:160,23.917200-8.2) {4.17 s};

\end{axis}
\end{tikzpicture}

%% file: figs/fig_P2.tex
\renewcommand{\sfdefault}{cmbr} 
\newcommand{\ttk}[1]{\footnotesize\textsf{#1}} 

\pgfplotsset{
    every axis/.append style={
        label style={font=\sffamily\footnotesize},      
        legend style={font=\sffamily\footnotesize},     
        tick label style={font=\sffamily\footnotesize}  
    }
}

\begin{tikzpicture}
\begin{axis}[%
    width=2.65in, height=1.45in, scale only axis,
    xmin=-0.5, xmax=60, ymin=22.5, ymax=25.2,
    xlabel={Iteration $L$}, 
    ylabel={Weighted Sum-Rate [bps/Hz]},
    xtick={0,10,20,30,40,50,60},
    ytick={22.5,23,23.5,24,24.5,25},
    xticklabels={\ttk{0},\ttk{10},\ttk{20},\ttk{30},\ttk{40},\ttk{50},\ttk{60}},
    yticklabels={\ttk{22.5},\ttk{23},\ttk{23.5},\ttk{24},\ttk{24.5},\ttk{25}},
    axis line style={line width=0.7pt, draw=black!40},
    axis background/.style={fill=white},
    xmajorgrids=true,
    ymajorgrids=true,
    grid style={dashed, line width=0.5pt, draw=black!12},
    legend style={
        at={(1,1)},
        anchor=north east,
        legend cell align=left,
        align=left,
        fill=white,
        draw=black!40,
        line width=0.7pt,
        font=\sffamily\scriptsize}
]

\addplot[
    line width=1.2pt,
    color=profBlue
]table[row sep=crcr]{%
0	22.614822\\
1	23.036993\\
2	23.337839\\
3	23.644596\\
4	23.812769\\
5	24.139257\\
6	24.178812\\
7	24.230724\\
8	24.158266\\
9	24.336872\\
10	24.431890\\
11	24.403877\\
12	24.440231\\
13	24.455576\\
14	24.444054\\
15	24.471277\\
16	24.483503\\
17	24.480045\\
18	24.502848\\
19	24.508657\\
20	24.514721\\
21	24.520741\\
22	24.526936\\
23	24.536337\\
24	24.552116\\
25	24.559155\\
26	24.571808\\
27	24.573280\\
28	24.578560\\
29	24.586367\\
30	24.582876\\
31	24.591007\\
32	24.588507\\
33	24.597597\\
34	24.596672\\
35	24.601065\\
36	24.601921\\
37	24.605297\\
38	24.605051\\
39	24.607872\\
40	24.610920\\
41	24.612116\\
42	24.610870\\
43	24.615242\\
44	24.611923\\
45	24.615265\\
46	24.615196\\
47	24.615036\\
48	24.616579\\
49	24.617907\\
50	24.618809\\
51	24.619644\\
52	24.620430\\
53	24.621368\\
54	24.622177\\
55	24.622513\\
56	24.622936\\
57	24.623547\\
58	24.624054\\
59	24.624306\\
};

\def\ymaxval{24.624306}

\addplot[
    dashed,
    line width=1.2pt,
    color=black!60,
    domain=-1:201
] {\ymaxval};

\node[anchor=south, yshift=1.5pt, font=\sffamily\scriptsize]
at (axis cs:12.5, 24.624) {Max = 24.62 bps/Hz};

\draw[->, thick, color=black!60, line width=1pt] 
  (axis cs:15,24.471277-0.05) -- ++(axis direction cs:0,-0.7);
\node[anchor=north, font=\sffamily\scriptsize] 
  at (axis cs:15,24.471277-0.8) {0.033 s};

\draw[->, thick, color=black!60, line width=1pt] 
  (axis cs:30,24.582876-0.05) -- ++(axis direction cs:0,-0.7);
\node[anchor=north, font=\sffamily\scriptsize] 
  at (axis cs:30,24.582876-0.8) {0.047 s};

\draw[->, thick, color=black!60, line width=1pt] 
  (axis cs:45,24.615265-0.05) -- ++(axis direction cs:0,-0.7);
\node[anchor=north, font=\sffamily\scriptsize] 
  at (axis cs:45,24.615265-0.8) {0.062 s};

\end{axis}
\end{tikzpicture}

%% file: figs/fig_twoStage.tex
\renewcommand{\sfdefault}{cmbr} 
\newcommand{\ttk}[1]{\footnotesize\textsf{#1}} 

\pgfplotsset{
    every axis/.append style={
        label style={font=\sffamily\footnotesize},      
        legend style={font=\sffamily\footnotesize},     
        tick label style={font=\sffamily\footnotesize}  
    }
}

\begin{tikzpicture}
\begin{axis}[%
    name=mainplot,
    width=2.97in, height=2.4in, scale only axis,
    xmin=-41, xmax=11, ymin=-0.5, ymax=22.1,
    xlabel={Per-DMA input power [dBm]}, 
    ylabel={WSR [bps/Hz]},
    xtick={-40,-30,-20,-10,0,10},
    ytick={0,5,10,15,20},
    xticklabels={\ttk{-40},\ttk{-30},\ttk{-20},\ttk{-10},\ttk{0},\ttk{10}},
    yticklabels={\ttk{0},\ttk{5},\ttk{10},\ttk{15},\ttk{20}}, 
    axis line style={line width=0.7pt, draw=black!40},
    axis background/.style={fill=white},
    xmajorgrids=true,
    ymajorgrids=true,
    grid style={dashed, line width=0.5pt, draw=black!12},
    legend style={
        at={(0,1)},
        anchor=north west,
        legend cell align=left,
        align=left,
        fill=white,
        draw=black!40,
        line width=0.7pt,
        font=\sffamily\scriptsize}
]
\addplot[
    line width=1pt,
    color=black,
    densely dotted,
    line width=1.2pt,
]table[row sep=crcr]{%
-40	0.7242908654\\
-30	2.5030686498\\
-20	5.8116083740\\
-10	10.5407443603\\
-5	13.1551568639\\
0	16.0024814112\\
5	18.7345642578\\
10	21.8756926209\\
};
\addlegendentry{Virtual fully-digital $\Fm_\Dt$ and $\Fm_\At$}

\addplot[
    line width=1pt,
    color=tabblue,
    mark=o,
    mark size=2.1pt,
    line width=1.2pt,
    mark options={solid, fill=tabblue}
]table[row sep=crcr]{%
-40	0.7242099255\\
-30	2.4443287493\\
-20	5.6997768825\\
-10	10.4790131126\\
-5	13.0529077977\\
0	15.8437830741\\
5	18.4102584738\\
10	21.1549917758\\
};
\addlegendentry{Two-stage decomp. ($N_{\mathsf{RF}}=8$)}

\addplot[
    line width=1pt,
    color=taborange,
    dashed,
    mark=+,
    mark size=3pt,
    line width=1.2pt,
    mark options={solid, fill=taborange}
]table[row sep=crcr]{%
-40	0.7242277687\\
-30	2.4494505726\\
-20	5.6726318716\\
-10	10.3879585492\\
-5	12.8658557883\\
0	15.2654011473\\
5	17.0068022329\\
10	18.4157859944\\
};
\addlegendentry{Two-stage decomp. ($N_{\mathsf{RF}}=6$)}

\addplot[
    line width=1pt,
    color=tabgreen,
    line width=1.2pt,
    mark options={solid, fill=tabblue}
]table[row sep=crcr]{%
-40	0.2807776051\\
-30	1.7522492000\\
-20	4.9221003000\\
-10	9.2041143982\\
-5	11.8624762708\\
0	14.3427394455\\
5	17.0807131055\\
10	19.9020130630\\
};
\addlegendentry{Direct tri-domain AO ($N_{\mathsf{RF}}=8$)}

\addplot[
    line width=1pt,
    color=profOrange,
    dashed,
    line width=1.2pt,
    mark options={solid, fill=taborange}
]table[row sep=crcr]{%
-40	0.2686090289\\
-30	1.7298707911\\
-20	4.8366304660\\
-10	9.1718713355\\
-5	11.8276228729\\
0	14.2804024326\\
5	16.8363902103\\
10	19.4691567120\\
};
\addlegendentry{Direct tri-domain AO ($N_{\mathsf{RF}}=6$)}

\draw[black!45, line width=0.6pt]
    (axis cs:-6,11.3) rectangle (axis cs:1,16.4);
    
\end{axis}

\begin{axis}[%
    at={(mainplot.south east)},
    anchor=south east,
    xshift=-0.12in,
    yshift=0.12in,
    width=0.85in, height=1.05in, scale only axis,
    xmin=-6, xmax=1, ymin=11.3, ymax=16.4,
    xtick={-5,0},
    ytick={12,14,16},
    tick label style={font=\sffamily\tiny},
    axis line style={line width=0.55pt, draw=black!45},
    axis background/.style={fill=white},
    xmajorgrids=true,
    ymajorgrids=true,
    grid style={dashed, line width=0.35pt, draw=black!10},
]
\addplot[
    color=black,
    densely dotted,
    line width=1.2pt,
]table[row sep=crcr]{%
-40	0.7242908654\\
-30	2.5030686498\\
-20	5.8116083740\\
-10	10.5407443603\\
-5	13.1551568639\\
0	16.0024814112\\
5	18.7345642578\\
10	21.8756926209\\
};

\addplot[
    color=tabblue,
    mark=o,
    mark size=2.1pt,
    line width=1.2pt,
    mark options={solid, fill=tabblue}
]table[row sep=crcr]{%
-40	0.7242099255\\
-30	2.4443287493\\
-20	5.6997768825\\
-10	10.4790131126\\
-5	13.0529077977\\
0	15.8437830741\\
5	18.4102584738\\
10	21.1549917758\\
};

\addplot[
    color=tabgreen,
    line width=1.2pt,
    mark options={solid, fill=tabblue}
]table[row sep=crcr]{%
-40	0.2807776051\\
-30	1.7522492000\\
-20	4.9221003000\\
-10	9.2041143982\\
-5	11.8624762708\\
0	14.3427394455\\
5	17.0807131055\\
10	19.9020130630\\
};

\addplot[
    color=taborange,
    dashed,
    mark=+,
    mark size=3pt,
    line width=1.2pt,
    mark options={solid, fill=taborange}
]table[row sep=crcr]{%
-40	0.7242277687\\
-30	2.4494505726\\
-20	5.6726318716\\
-10	10.3879585492\\
-5	12.8658557883\\
0	15.2654011473\\
5	17.0068022329\\
10	18.4157859944\\
};

\addplot[
    color=profOrange,
    dashed,
    line width=1.2pt,
    mark options={solid, fill=taborange}
]table[row sep=crcr]{%
-40	0.2686090289\\
-30	1.7298707911\\
-20	4.8366304660\\
-10	9.1718713355\\
-5	11.8276228729\\
0	14.2804024326\\
5	16.8363902103\\
10	19.4691567120\\
};
\end{axis}
\end{tikzpicture}

%% file: figs/fig_BatchLoss.tex
\renewcommand{\sfdefault}{cmbr} 
\newcommand{\ttk}[1]{\footnotesize\textsf{#1}} 

\pgfplotsset{
    every axis/.append style={
        label style={font=\sffamily\footnotesize},      
        legend style={font=\sffamily\footnotesize},     
        tick label style={font=\sffamily\footnotesize}  
    }
}

\begin{tikzpicture}
\begin{axis}[%
    width=2.97in, height=2.4in, scale only axis,
    xmin=-2, xmax=302, ymin=-21, ymax=-7.5,
    xlabel={Batch}, 
    ylabel={Training loss: negative WSR [bps/Hz]},
    xtick={0,50,100,150,200,250,300},
    ytick={-22,-20,-18,-16,-14,-12,-10,-8,-6},
    xticklabels={\ttk{0},\ttk{50},\ttk{100},\ttk{150},\ttk{200},\ttk{250},\ttk{300}},
    yticklabels={\ttk{-22},\ttk{-20},\ttk{-18},\ttk{-16},\ttk{-14},\ttk{-12},\ttk{-10},\ttk{-8},\ttk{-6}}, 
    axis line style={line width=0.7pt, draw=black!40},
    axis background/.style={fill=white},
    xmajorgrids=true,
    ymajorgrids=true,
    grid style={dashed, line width=0.5pt, draw=black!12},
    legend style={
        at={(1,1)},
        anchor=north east,
        legend cell align=left,
        align=left,
        fill=white,
        draw=black!40,
        line width=0.7pt,
        font=\sffamily\scriptsize}
]
\addplot[
    line width=1pt,
    color=profGreen,
]table[row sep=crcr]{%
0	-10.187017440795898\\
1	-11.122885704040527\\
2	-10.918157577514648\\
3	-10.383801460266113\\
4	-9.951119422912598\\
5	-11.493356704711914\\
6	-11.033282279968262\\
7	-10.660650253295898\\
8	-11.172776222229004\\
9	-10.768385887145996\\
10	-11.817366600036621\\
11	-10.753162384033203\\
12	-11.592574119567871\\
13	-12.440526008605957\\
14	-12.01171588897705\\
15	-11.401972770690918\\
16	-11.986245155334473\\
17	-11.394588470458984\\
18	-12.579156875610352\\
19	-12.936415672302246\\
20	-10.22474193572998\\
21	-11.926039695739746\\
22	-12.035187721252441\\
23	-13.66046142578125\\
24	-11.333725929260254\\
25	-15.1720552444458\\
26	-12.924620628356934\\
27	-12.075475692749023\\
28	-12.98600959777832\\
29	-14.3414306640625\\
30	-13.333361625671387\\
31	-14.255300521850586\\
32	-14.483490943908691\\
33	-13.346718788146973\\
34	-12.572265625\\
35	-14.783565521240234\\
36	-13.719901084899902\\
37	-13.513313293457031\\
38	-14.606616973876953\\
39	-11.481501579284668\\
40	-12.685674667358398\\
41	-14.25533676147461\\
42	-15.411110877990723\\
43	-14.719228744506836\\
44	-13.744505882263184\\
45	-15.541858673095703\\
46	-15.015918731689453\\
47	-14.216866493225098\\
48	-14.478506088256836\\
49	-14.740152359008789\\
50	-14.50958251953125\\
51	-13.985088348388672\\
52	-14.127350807189941\\
53	-15.732382774353027\\
54	-14.03634262084961\\
55	-14.376703262329102\\
56	-15.143407821655273\\
57	-14.146860122680664\\
58	-15.739723205566406\\
59	-15.090963363647461\\
60	-15.023324966430664\\
61	-14.798613548278809\\
62	-15.451106071472168\\
63	-13.687479019165039\\
64	-15.351470947265625\\
65	-14.13085651397705\\
66	-14.540709495544434\\
67	-16.01726531982422\\
68	-13.868908882141113\\
69	-15.911327362060547\\
70	-13.314746856689453\\
71	-14.270726203918457\\
72	-15.658652305603027\\
73	-16.265127182006836\\
74	-15.225445747375488\\
75	-14.441570281982422\\
76	-13.502510070800781\\
77	-15.962510108947754\\
78	-14.593903541564941\\
79	-15.237396240234375\\
80	-14.850147247314453\\
81	-15.311431884765625\\
82	-16.653350830078125\\
83	-16.1155948638916\\
84	-15.534679412841797\\
85	-15.388496398925781\\
86	-16.122285842895508\\
87	-16.525588989257812\\
88	-13.772363662719727\\
89	-15.156655311584473\\
90	-13.983728408813477\\
91	-14.831045150756836\\
92	-13.514721870422363\\
93	-16.26898765563965\\
94	-15.671761512756348\\
95	-13.933684349060059\\
96	-16.264083862304688\\
97	-14.1621732711792\\
98	-15.292488098144531\\
99	-15.65710735321045\\
100	-16.37059783935547\\
101	-15.006610870361328\\
102	-15.481057167053223\\
103	-15.572470664978027\\
104	-15.948452949523926\\
105	-17.17485237121582\\
106	-15.362051963806152\\
107	-15.635839462280273\\
108	-15.444011688232422\\
109	-15.806888580322266\\
110	-15.910392761230469\\
111	-15.608752250671387\\
112	-15.438133239746094\\
113	-15.273118019104004\\
114	-15.311563491821289\\
115	-14.579501152038574\\
116	-15.286798477172852\\
117	-15.407919883728027\\
118	-16.071447372436523\\
119	-15.709494590759277\\
120	-15.400007247924805\\
121	-13.3629150390625\\
122	-16.10527992248535\\
123	-16.877670288085938\\
124	-16.840057373046875\\
125	-14.958300590515137\\
126	-15.84737777709961\\
127	-15.735976219177246\\
128	-16.35589027404785\\
129	-16.34817886352539\\
130	-15.244711875915527\\
131	-15.550270080566406\\
132	-14.772716522216797\\
133	-15.805510520935059\\
134	-15.505545616149902\\
135	-16.0878963470459\\
136	-14.54876708984375\\
137	-15.845070838928223\\
138	-13.98391342163086\\
139	-14.629054069519043\\
140	-16.709871292114258\\
141	-15.131575584411621\\
142	-14.48392105102539\\
143	-15.491523742675781\\
144	-15.969717979431152\\
145	-15.384284973144531\\
146	-15.157244682312012\\
147	-15.92455768585205\\
148	-15.555231094360352\\
149	-15.326488494873047\\
150	-15.79083251953125\\
151	-14.020951271057129\\
152	-14.80795669555664\\
153	-15.336583137512207\\
154	-15.653914451599121\\
155	-14.551148414611816\\
156	-15.8663969039917\\
157	-17.387781143188477\\
158	-16.76451873779297\\
159	-16.083581924438477\\
160	-15.786467552185059\\
161	-14.322775840759277\\
162	-15.109570503234863\\
163	-15.132270812988281\\
164	-16.16598892211914\\
165	-15.546499252319336\\
166	-15.189489364624023\\
167	-15.692082405090332\\
168	-13.932961463928223\\
169	-13.97856330871582\\
170	-15.264686584472656\\
171	-15.576730728149414\\
172	-15.842275619506836\\
173	-16.269241333007812\\
174	-17.054180145263672\\
175	-15.222312927246094\\
176	-15.57148551940918\\
177	-14.869007110595703\\
178	-16.787925720214844\\
179	-16.505474090576172\\
180	-15.581612586975098\\
181	-15.613595962524414\\
182	-13.096240043640137\\
183	-17.455778121948242\\
184	-15.51548957824707\\
185	-15.896035194396973\\
186	-16.56831169128418\\
187	-14.801456451416016\\
188	-17.392419815063477\\
189	-15.350114822387695\\
190	-14.807945251464844\\
191	-15.454405784606934\\
192	-14.536833763122559\\
193	-14.68761920928955\\
194	-15.623047828674316\\
195	-14.711124420166016\\
196	-15.793560028076172\\
197	-15.018747329711914\\
198	-16.321348190307617\\
199	-15.134486198425293\\
200	-15.337496757507324\\
201	-15.40939712524414\\
202	-16.565353393554688\\
203	-15.54684829711914\\
204	-16.248096466064453\\
205	-14.457324981689453\\
206	-14.970041275024414\\
207	-14.454188346862793\\
208	-15.556659698486328\\
209	-15.887310028076172\\
210	-15.254131317138672\\
211	-15.905055046081543\\
212	-15.17338752746582\\
213	-17.053499221801758\\
214	-14.201851844787598\\
215	-15.533059120178223\\
216	-15.429338455200195\\
217	-16.015823364257812\\
218	-15.871227264404297\\
219	-15.049899101257324\\
220	-16.17119789123535\\
221	-15.557302474975586\\
222	-17.71240234375\\
223	-16.30997085571289\\
224	-16.182239532470703\\
225	-15.758975982666016\\
226	-16.57137107849121\\
227	-15.7284517288208\\
228	-14.740008354187012\\
229	-14.431695938110352\\
230	-16.14474868774414\\
231	-14.622549057006836\\
232	-16.425960540771484\\
233	-14.606647491455078\\
234	-15.994183540344238\\
235	-16.295974731445312\\
236	-14.811108589172363\\
237	-16.031404495239258\\
238	-15.566444396972656\\
239	-16.656293869018555\\
240	-16.80954360961914\\
241	-14.047721862792969\\
242	-15.997150421142578\\
243	-17.042091369628906\\
244	-14.776610374450684\\
245	-16.117307662963867\\
246	-17.73802947998047\\
247	-15.3446044921875\\
248	-14.50361442565918\\
249	-16.652095794677734\\
250	-16.04107093811035\\
251	-16.343549728393555\\
252	-15.900956153869629\\
253	-15.484901428222656\\
254	-16.41907501220703\\
255	-15.733163833618164\\
256	-15.62830924987793\\
257	-14.270946502685547\\
258	-15.433366775512695\\
259	-15.663095474243164\\
260	-13.105605125427246\\
261	-15.721565246582031\\
262	-14.347332954406738\\
263	-15.582023620605469\\
264	-17.077112197875977\\
265	-16.311363220214844\\
266	-17.550254821777344\\
267	-15.870757102966309\\
268	-16.469026565551758\\
269	-16.43752670288086\\
270	-15.556135177612305\\
271	-16.322864532470703\\
272	-17.178590774536133\\
273	-17.148578643798828\\
274	-15.159046173095703\\
275	-17.987611770629883\\
276	-15.686596870422363\\
277	-16.223285675048828\\
278	-14.683649063110352\\
279	-15.031424522399902\\
280	-17.998126983642578\\
281	-14.775103569030762\\
282	-16.5321102142334\\
283	-17.54035186767578\\
284	-16.340852737426758\\
285	-14.784473419189453\\
286	-16.143903732299805\\
287	-17.410602569580078\\
288	-16.078357696533203\\
289	-14.685464859008789\\
290	-16.161901473999023\\
291	-14.867581367492676\\
292	-15.295293807983398\\
293	-17.265548706054688\\
294	-14.103510856628418\\
295	-14.39753532409668\\
296	-16.532310485839844\\
297	-15.164678573608398\\
298	-17.279088973999023\\
299	-15.401711463928223\\
};
\addlegendentry{Nonzero initialization of output-layer weights}

\addplot[
    line width=1pt,
    color=profRed,
]table[row sep=crcr]{%
0	-11.28169059753418\\
1	-13.629807472229004\\
2	-13.252885818481445\\
3	-14.101726531982422\\
4	-12.575460433959961\\
5	-15.366774559020996\\
6	-13.17846965789795\\
7	-12.087149620056152\\
8	-13.233033180236816\\
9	-12.938942909240723\\
10	-13.144323348999023\\
11	-12.076696395874023\\
12	-14.019974708557129\\
13	-13.297263145446777\\
14	-12.125812530517578\\
15	-12.767570495605469\\
16	-13.237926483154297\\
17	-12.472760200500488\\
18	-14.773455619812012\\
19	-14.884446144104004\\
20	-13.158773422241211\\
21	-14.17959213256836\\
22	-14.203335762023926\\
23	-15.504298210144043\\
24	-14.027894020080566\\
25	-18.541370391845703\\
26	-15.818424224853516\\
27	-15.540214538574219\\
28	-14.677263259887695\\
29	-15.735634803771973\\
30	-16.094594955444336\\
31	-16.131258010864258\\
32	-16.4555721282959\\
33	-15.325919151306152\\
34	-14.40942096710205\\
35	-16.648344039916992\\
36	-14.855045318603516\\
37	-14.791826248168945\\
38	-15.584308624267578\\
39	-13.138070106506348\\
40	-14.067163467407227\\
41	-15.635075569152832\\
42	-17.000137329101562\\
43	-15.123806953430176\\
44	-15.424147605895996\\
45	-16.507740020751953\\
46	-16.193517684936523\\
47	-15.881516456604004\\
48	-16.586162567138672\\
49	-16.82452964782715\\
50	-16.047496795654297\\
51	-14.98737907409668\\
52	-15.84195327758789\\
53	-17.0716552734375\\
54	-14.918185234069824\\
55	-15.057441711425781\\
56	-17.569568634033203\\
57	-15.596352577209473\\
58	-17.192838668823242\\
59	-16.924270629882812\\
60	-16.30906867980957\\
61	-16.517141342163086\\
62	-16.826448440551758\\
63	-16.13434600830078\\
64	-17.128555297851562\\
65	-15.991680145263672\\
66	-15.820943832397461\\
67	-17.734683990478516\\
68	-15.984552383422852\\
69	-17.868976593017578\\
70	-14.828207969665527\\
71	-15.24096393585205\\
72	-17.964122772216797\\
73	-18.73337173461914\\
74	-17.39032745361328\\
75	-15.898294448852539\\
76	-15.685629844665527\\
77	-18.068111419677734\\
78	-16.28003692626953\\
79	-17.513301849365234\\
80	-16.8652286529541\\
81	-17.2940731048584\\
82	-19.103416442871094\\
83	-17.96877670288086\\
84	-16.88323402404785\\
85	-18.484264373779297\\
86	-18.159414291381836\\
87	-18.537385940551758\\
88	-15.693676948547363\\
89	-17.61636734008789\\
90	-16.132808685302734\\
91	-17.57901382446289\\
92	-15.898133277893066\\
93	-19.56355094909668\\
94	-18.261272430419922\\
95	-15.841955184936523\\
96	-18.25169563293457\\
97	-17.272432327270508\\
98	-18.623342514038086\\
99	-17.916898727416992\\
100	-19.398265838623047\\
101	-18.173555374145508\\
102	-18.307296752929688\\
103	-17.80292510986328\\
104	-19.083446502685547\\
105	-19.92090606689453\\
106	-17.834781646728516\\
107	-17.988845825195312\\
108	-18.69803237915039\\
109	-19.08523178100586\\
110	-18.969341278076172\\
111	-19.563901901245117\\
112	-18.74555206298828\\
113	-18.32575035095215\\
114	-18.808435440063477\\
115	-16.36037826538086\\
116	-17.579626083374023\\
117	-18.217418670654297\\
118	-17.7843074798584\\
119	-18.398887634277344\\
120	-17.657817840576172\\
121	-15.142492294311523\\
122	-17.70888900756836\\
123	-19.096559524536133\\
124	-18.687131881713867\\
125	-17.437318801879883\\
126	-18.98811149597168\\
127	-18.306486129760742\\
128	-18.066919326782227\\
129	-18.36888885498047\\
130	-17.075111389160156\\
131	-17.8388614654541\\
132	-17.661603927612305\\
133	-19.128681182861328\\
134	-17.649063110351562\\
135	-19.140281677246094\\
136	-16.277191162109375\\
137	-18.163558959960938\\
138	-16.65976333618164\\
139	-17.26721954345703\\
140	-19.715015411376953\\
141	-17.1737060546875\\
142	-16.834932327270508\\
143	-18.5297908782959\\
144	-18.27648162841797\\
145	-18.293563842773438\\
146	-17.647724151611328\\
147	-19.411516189575195\\
148	-18.704681396484375\\
149	-18.830842971801758\\
150	-18.76209831237793\\
151	-15.990727424621582\\
152	-17.74207305908203\\
153	-18.625709533691406\\
154	-18.714475631713867\\
155	-16.745901107788086\\
156	-18.884313583374023\\
157	-20.977712631225586\\
158	-19.78839683532715\\
159	-18.44664192199707\\
160	-18.61786651611328\\
161	-16.39423370361328\\
162	-17.95808219909668\\
163	-17.03105926513672\\
164	-18.358488082885742\\
165	-18.06818389892578\\
166	-18.020814895629883\\
167	-18.615690231323242\\
168	-16.076738357543945\\
169	-15.868721008300781\\
170	-18.579952239990234\\
171	-17.999597549438477\\
172	-18.445629119873047\\
173	-19.34949493408203\\
174	-20.651880264282227\\
175	-18.121471405029297\\
176	-18.425756454467773\\
177	-17.183469772338867\\
178	-20.434144973754883\\
179	-19.238811492919922\\
180	-18.633028030395508\\
181	-18.935434341430664\\
182	-15.280876159667969\\
183	-20.33028793334961\\
184	-18.083715438842773\\
185	-19.691814422607422\\
186	-19.541641235351562\\
187	-17.680383682250977\\
188	-19.93601417541504\\
189	-17.950071334838867\\
190	-18.096694946289062\\
191	-18.256000518798828\\
192	-18.514114379882812\\
193	-17.731216430664062\\
194	-18.661602020263672\\
195	-18.230989456176758\\
196	-18.54991912841797\\
197	-17.528884887695312\\
198	-19.464509963989258\\
199	-17.874011993408203\\
200	-17.767210006713867\\
201	-19.3642635345459\\
202	-20.03325843811035\\
203	-19.07288360595703\\
204	-18.689165115356445\\
205	-16.441747665405273\\
206	-18.271373748779297\\
207	-17.631563186645508\\
208	-19.496789932250977\\
209	-19.441591262817383\\
210	-17.914203643798828\\
211	-18.47783660888672\\
212	-17.96481704711914\\
213	-20.47698402404785\\
214	-16.61884117126465\\
215	-18.1611385345459\\
216	-18.29078483581543\\
217	-19.055862426757812\\
218	-18.746179580688477\\
219	-17.936677932739258\\
220	-19.421682357788086\\
221	-18.42268943786621\\
222	-20.881746292114258\\
223	-19.62796401977539\\
224	-18.4644832611084\\
225	-19.57208824157715\\
226	-19.289701461791992\\
227	-18.229524612426758\\
228	-17.516630172729492\\
229	-17.047954559326172\\
230	-18.929893493652344\\
231	-17.581453323364258\\
232	-19.6486873626709\\
233	-17.233442306518555\\
234	-19.631202697753906\\
235	-20.67949676513672\\
236	-17.685012817382812\\
237	-19.200227737426758\\
238	-18.552345275878906\\
239	-20.23284339904785\\
240	-19.370468139648438\\
241	-16.638561248779297\\
242	-19.36263084411621\\
243	-20.711711883544922\\
244	-16.640329360961914\\
245	-18.864625930786133\\
246	-20.40576934814453\\
247	-18.407106399536133\\
248	-18.288312911987305\\
249	-19.958721160888672\\
250	-18.68463706970215\\
251	-18.691917419433594\\
252	-18.278322219848633\\
253	-18.0230655670166\\
254	-19.139598846435547\\
255	-18.682600021362305\\
256	-18.66814422607422\\
257	-16.643592834472656\\
258	-18.765663146972656\\
259	-19.08013343811035\\
260	-15.978178977966309\\
261	-18.735034942626953\\
262	-17.47173500061035\\
263	-18.87144660949707\\
264	-20.23041343688965\\
265	-19.350418090820312\\
266	-20.597057342529297\\
267	-19.852773666381836\\
268	-19.160568237304688\\
269	-18.8535213470459\\
270	-18.04802131652832\\
271	-18.912691116333008\\
272	-20.833559036254883\\
273	-21.241783142089844\\
274	-17.339799880981445\\
275	-20.815773010253906\\
276	-18.52910614013672\\
277	-18.74208641052246\\
278	-17.662860870361328\\
279	-17.626585006713867\\
280	-20.76950454711914\\
281	-18.47214698791504\\
282	-19.370014190673828\\
283	-20.029075622558594\\
284	-18.950597763061523\\
285	-18.073163986206055\\
286	-19.09868621826172\\
287	-20.080041885375977\\
288	-18.632553100585938\\
289	-17.650917053222656\\
290	-19.895526885986328\\
291	-16.85262107849121\\
292	-17.814918518066406\\
293	-20.318191528320312\\
294	-17.971860885620117\\
295	-16.900882720947266\\
296	-19.864089965820312\\
297	-18.391830444335938\\
298	-20.43206024169922\\
299	-17.629680633544922\\
};
\addlegendentry{Zero initialization of output-layer weights}

\addplot[
    line width=1.5pt,
    densely dotted,
    color=black,
]table[row sep=crcr]{%
0	-11.28169059753418\\
1	-13.683599472045898\\
2	-13.188577651977539\\
3	-14.169621467590332\\
4	-12.582854270935059\\
5	-15.269936561584473\\
6	-13.120899200439453\\
7	-12.148850440979004\\
8	-13.270868301391602\\
9	-13.059840202331543\\
10	-12.909720420837402\\
11	-11.8245267868042\\
12	-13.72885513305664\\
13	-12.650859832763672\\
14	-11.7281494140625\\
15	-11.961021423339844\\
16	-12.374499320983887\\
17	-11.686671257019043\\
18	-13.238247871398926\\
19	-12.864261627197266\\
20	-11.163893699645996\\
21	-12.129725456237793\\
22	-11.681000709533691\\
23	-13.429516792297363\\
24	-10.45287799835205\\
25	-14.712197303771973\\
26	-11.635512351989746\\
27	-12.193621635437012\\
28	-12.407244682312012\\
29	-12.184562683105469\\
30	-11.693891525268555\\
31	-12.585415840148926\\
32	-12.724225044250488\\
33	-12.566303253173828\\
34	-12.218938827514648\\
35	-13.15317153930664\\
36	-12.537917137145996\\
37	-11.80626106262207\\
38	-13.053916931152344\\
39	-10.393903732299805\\
40	-11.800314903259277\\
41	-13.740059852600098\\
42	-13.726794242858887\\
43	-12.611519813537598\\
44	-11.930272102355957\\
45	-13.842411994934082\\
46	-13.01888656616211\\
47	-12.81631088256836\\
48	-13.126733779907227\\
49	-13.009079933166504\\
50	-13.125505447387695\\
51	-11.936694145202637\\
52	-11.969734191894531\\
53	-13.833989143371582\\
54	-12.262393951416016\\
55	-11.913765907287598\\
56	-12.857695579528809\\
57	-12.1857328414917\\
58	-13.016348838806152\\
59	-12.215304374694824\\
60	-12.713418960571289\\
61	-12.037341117858887\\
62	-13.111078262329102\\
63	-12.10064697265625\\
64	-13.054510116577148\\
65	-10.924137115478516\\
66	-12.707606315612793\\
67	-13.645881652832031\\
68	-11.354752540588379\\
69	-13.029162406921387\\
70	-11.75372314453125\\
71	-11.648406982421875\\
72	-13.57433795928955\\
73	-13.313116073608398\\
74	-11.90081787109375\\
75	-11.080146789550781\\
76	-11.657980918884277\\
77	-13.423905372619629\\
78	-12.04853630065918\\
79	-12.911125183105469\\
80	-12.918338775634766\\
81	-13.042569160461426\\
82	-14.255363464355469\\
83	-13.274913787841797\\
84	-12.85964584350586\\
85	-13.270659446716309\\
86	-12.793249130249023\\
87	-13.45898151397705\\
88	-11.969189643859863\\
89	-13.009708404541016\\
90	-11.57662582397461\\
91	-11.564112663269043\\
92	-11.15812873840332\\
93	-12.877741813659668\\
94	-12.666592597961426\\
95	-10.781394958496094\\
96	-12.710662841796875\\
97	-12.327109336853027\\
98	-12.975639343261719\\
99	-13.09505844116211\\
100	-12.825603485107422\\
101	-12.172679901123047\\
102	-13.225090026855469\\
103	-13.368461608886719\\
104	-13.923482894897461\\
105	-13.72989273071289\\
106	-13.030556678771973\\
107	-13.433258056640625\\
108	-12.473557472229004\\
109	-12.814273834228516\\
110	-13.281696319580078\\
111	-13.074990272521973\\
112	-12.52743148803711\\
113	-13.224237442016602\\
114	-12.354327201843262\\
115	-11.700835227966309\\
116	-12.692898750305176\\
117	-12.287408828735352\\
118	-12.028242111206055\\
119	-12.891165733337402\\
120	-12.600354194641113\\
121	-10.722076416015625\\
122	-13.168143272399902\\
123	-13.284721374511719\\
124	-12.80586051940918\\
125	-12.944048881530762\\
126	-14.030884742736816\\
127	-13.207871437072754\\
128	-12.348011016845703\\
129	-12.933292388916016\\
130	-11.968770980834961\\
131	-12.606053352355957\\
132	-11.601797103881836\\
133	-13.443154335021973\\
134	-12.81607437133789\\
135	-14.173192024230957\\
136	-11.421574592590332\\
137	-12.694744110107422\\
138	-11.888554573059082\\
139	-11.383627891540527\\
140	-12.627668380737305\\
141	-12.21296501159668\\
142	-11.676854133605957\\
143	-12.825811386108398\\
144	-12.459569931030273\\
145	-12.096872329711914\\
146	-13.169910430908203\\
147	-12.196084976196289\\
148	-12.52682113647461\\
149	-12.634599685668945\\
150	-12.935782432556152\\
151	-12.460196495056152\\
152	-12.418042182922363\\
153	-11.928404808044434\\
154	-12.304305076599121\\
155	-11.810644149780273\\
156	-13.214275360107422\\
157	-14.249966621398926\\
158	-13.395488739013672\\
159	-12.43443489074707\\
160	-12.531805038452148\\
161	-11.289754867553711\\
162	-13.126542091369629\\
163	-12.776756286621094\\
164	-11.969067573547363\\
165	-12.011212348937988\\
166	-12.503523826599121\\
167	-12.235993385314941\\
168	-11.469085693359375\\
169	-11.748088836669922\\
170	-12.481240272521973\\
171	-13.18829345703125\\
172	-12.860511779785156\\
173	-12.566544532775879\\
174	-13.885297775268555\\
175	-12.664681434631348\\
176	-11.608478546142578\\
177	-12.333898544311523\\
178	-14.383100509643555\\
179	-13.192007064819336\\
180	-13.283376693725586\\
181	-12.291731834411621\\
182	-10.280879020690918\\
183	-14.066900253295898\\
184	-12.19736099243164\\
185	-12.45189094543457\\
186	-13.574450492858887\\
187	-12.070696830749512\\
188	-13.79174518585205\\
189	-12.322408676147461\\
190	-12.34449577331543\\
191	-12.568198204040527\\
192	-11.902966499328613\\
193	-11.944570541381836\\
194	-12.200057029724121\\
195	-11.808389663696289\\
196	-13.169515609741211\\
197	-11.354796409606934\\
198	-12.334992408752441\\
199	-11.817994117736816\\
200	-12.430716514587402\\
201	-13.578544616699219\\
202	-13.288098335266113\\
203	-12.462894439697266\\
204	-13.053821563720703\\
205	-11.512943267822266\\
206	-12.115767478942871\\
207	-11.531194686889648\\
208	-12.558629989624023\\
209	-13.555817604064941\\
210	-13.120327949523926\\
211	-13.254481315612793\\
212	-12.927204132080078\\
213	-13.270747184753418\\
214	-11.264342308044434\\
215	-13.072543144226074\\
216	-12.569931983947754\\
217	-12.778602600097656\\
218	-12.280428886413574\\
219	-12.495423316955566\\
220	-13.132085800170898\\
221	-12.126028060913086\\
222	-14.517221450805664\\
223	-12.816049575805664\\
224	-12.576638221740723\\
225	-12.047296524047852\\
226	-12.672350883483887\\
227	-12.05123519897461\\
228	-11.585274696350098\\
229	-11.89495849609375\\
230	-12.528661727905273\\
231	-12.740270614624023\\
232	-11.529540061950684\\
233	-11.607217788696289\\
234	-12.067741394042969\\
235	-13.3705472946167\\
236	-12.732686996459961\\
237	-12.246196746826172\\
238	-13.516805648803711\\
239	-14.688119888305664\\
240	-12.464781761169434\\
241	-11.276394844055176\\
242	-13.111852645874023\\
243	-13.096508026123047\\
244	-12.597185134887695\\
245	-12.537347793579102\\
246	-14.126446723937988\\
247	-12.00257682800293\\
248	-11.657454490661621\\
249	-13.327062606811523\\
250	-12.444397926330566\\
251	-13.342704772949219\\
252	-13.24150276184082\\
253	-11.159695625305176\\
254	-13.06959342956543\\
255	-12.51374626159668\\
256	-12.18639850616455\\
257	-12.179290771484375\\
258	-12.163237571716309\\
259	-12.753425598144531\\
260	-10.35130500793457\\
261	-12.20419692993164\\
262	-12.144340515136719\\
263	-12.530793190002441\\
264	-13.396689414978027\\
265	-12.822314262390137\\
266	-13.761034965515137\\
267	-13.374345779418945\\
268	-13.81601333618164\\
269	-13.009438514709473\\
270	-12.761037826538086\\
271	-13.524927139282227\\
272	-13.772974014282227\\
273	-13.677367210388184\\
274	-12.447758674621582\\
275	-13.718814849853516\\
276	-11.909775733947754\\
277	-13.965837478637695\\
278	-11.806452751159668\\
279	-11.70071029663086\\
280	-13.759783744812012\\
281	-11.636096954345703\\
282	-13.047187805175781\\
283	-14.235538482666016\\
284	-12.930264472961426\\
285	-11.753798484802246\\
286	-12.7671480178833\\
287	-13.101109504699707\\
288	-12.187082290649414\\
289	-11.857927322387695\\
290	-12.771341323852539\\
291	-11.580976486206055\\
292	-12.361122131347656\\
293	-13.434535026550293\\
294	-11.292545318603516\\
295	-11.658474922180176\\
296	-12.893082618713379\\
297	-11.17505168914795\\
298	-13.361254692077637\\
299	-11.819681167602539\\
};
\addlegendentry{(Negative) optimized WSR via model-based Alg.~\ref{ModelAlgo}}

\end{axis}
\end{tikzpicture}

%% file: figs/fig_WSRvsRuntime.tex
\renewcommand{\sfdefault}{cmbr} 
\newcommand{\ttk}[1]{\footnotesize\textsf{#1}} 

\pgfplotsset{
    every axis/.append style={
        label style={font=\sffamily\footnotesize},      
        legend style={font=\sffamily\footnotesize},     
        tick label style={font=\sffamily\footnotesize}  
    }
}

\begin{tikzpicture}
\begin{axis}[%
    width=2.97in, height=2in, scale only axis,
    xmin=0, xmax=20, ymin=1, ymax=38,
    xmode=log,
    xlabel={Runtime [seconds]}, 
    ylabel={Achieved WSR [bps/Hz]},
    xtick={0.1,0.2,0.3,0.4,0.5,0.6,0.7,0.8,0.9,1,2,3,4,5,6,7,8,9,10,20,30,40,50,60,70,80,90,100,200,300,400},
    ytick={0,5,10,15,20,25,30,35,40},
    xticklabels={{},\ttk{0.2},{},{},\ttk{0.5},{},{},{},{},\ttk{1},\ttk{2},{},{},\ttk{5},{},{},{},{},\ttk{10},{},{},{},{},{},{},{},{},\ttk{100},{},{},{}},
    yticklabels={\ttk{0},\ttk{5},\ttk{10},\ttk{15},\ttk{20},\ttk{25},\ttk{30},\ttk{35},\ttk{40}},
    axis line style={line width=0.7pt, draw=black!40},
    axis background/.style={fill=white},
    xmajorgrids=true,
    ymajorgrids=true,
    grid style={dashed, line width=0.5pt, draw=black!12},
    legend style={
        at={(0.5,1.05)}, anchor=south,
        legend cell align=left, 
        legend columns=3, 
        align=left,
        fill=white,
        draw=black!40,
        line width=0.7pt,
        font=\sffamily\tiny}
]

\addplot[
    only marks,
    mark=*,
    mark size=2pt,
    color=tabblue,
] table[row sep=crcr]{%
2.909279e-01	11.057173\\
1.886899e-01	13.581409\\
1.984808e-01	23.604412\\
1.794889e-01	12.926561\\
1.959748e-01	8.712996\\
1.569259e-01	4.473157\\
1.743660e-01	6.376812\\
1.645262e-01	16.228903\\
1.574059e-01	10.528525\\
2.070768e-01	21.760420\\
2.204030e-01	24.281862\\
1.560299e-01	16.493393\\
2.337599e-01	22.390772\\
1.759818e-01	12.119472\\
1.755872e-01	14.163906\\
1.941092e-01	16.095266\\
};\addlegendentry{Unfolded Alg.~\ref{ModelAlgo} ($L$=2)\,\quad\ }

\addplot[
    only marks,
    mark=triangle*,
    mark size=2.5pt,
    color=taborange,
] table[row sep=crcr]{%
4.500291e-01	11.241055\\
3.613679e-01	21.025366\\
2.639222e-01	26.677519\\
2.751560e-01	16.591639\\
3.104191e-01	9.412502\\
2.231224e-01	4.894503\\
3.205111e-01	11.731529\\
2.768192e-01	14.416709\\
2.281289e-01	10.958114\\
2.650590e-01	29.662352\\
2.659979e-01	32.525482\\
2.737260e-01	18.836733\\
4.011810e-01	27.531694\\
3.216469e-01	12.985290\\
2.621698e-01	18.326509\\
3.022010e-01	19.074768\\
};\addlegendentry{Unfolded Alg.~\ref{ModelAlgo} ($L$=4)\,\quad\ }

\addplot[
    only marks,
    mark=square*,
    mark size=1.6pt,
    color=tabgreen,
] table[row sep=crcr]{%
4.291451e-01	11.234782\\
3.550620e-01	21.747099\\
4.414740e-01	25.196762\\
3.972862e-01	16.683128\\
4.808381e-01	9.652327\\
2.961521e-01	4.929611\\
3.503480e-01	11.784321\\
2.957320e-01	14.899930\\
4.079607e-01	11.335408\\
3.727529e-01	30.430546\\
3.482850e-01	35.457535\\
3.327961e-01	18.215651\\
4.421959e-01	29.872442\\
3.480649e-01	13.005246\\
3.996501e-01	20.502361\\
4.903460e-01	19.015190\\
};\addlegendentry{Unfolded Alg.~\ref{ModelAlgo} ($L$=6)}

\addplot[
    only marks,
    mark=pentagon,
    mark size=2.2pt,
    line width=0.8pt,
    color=tabred,
] table[row sep=crcr]{%
8.244650e+00	11.620625\\
7.075068e+00	18.592163\\
5.430616e+00	22.103212\\
6.666707e+00	17.695824\\
6.360641e+00	10.041546\\
1.668616e+00	4.974675\\
6.788353e+00	14.820833\\
5.379013e+00	15.523195\\
5.385146e+00	11.926088\\
6.679929e+00	32.454399\\
6.653937e+00	24.485802\\
5.332873e+00	20.836136\\
8.271151e+00	31.217968\\
6.741802e+00	13.760427\\
6.781861e+00	18.384701\\
8.016479e+00	18.917418\\
};\addlegendentry{Alg.~\ref{ModelAlgo} ($L$=200)}

\addplot[
    only marks,
    mark=+,
    mark size=2.2pt,
    line width=0.8pt,
    color=black,
] table[row sep=crcr]{%
3.286651e+00	6.407126\\
7.613615e+00	19.410210\\
8.336654e+00	23.037167\\
2.625105e+00	12.620838\\
3.247528e+00	9.493598\\
2.169101e+00	3.141079\\
3.489359e+00	10.104256\\
7.521987e+00	11.043639\\
2.141586e+00	10.434544\\
2.647735e+00	24.361130\\
8.277720e+00	28.502680\\
2.275177e+00	16.653561\\
4.253829e+00	26.412628\\
2.753992e+00	12.019421\\
6.028980e+00	17.676218\\
1.020719e+00	15.873992\\
};\addlegendentry{Benchmark: AO-MO}

\addplot[
    only marks,
    mark=x,
    mark size=2.2pt,
    line width=0.8pt,
    color=black,
] table[row sep=crcr]{%
3.172381e-01	8.127346\\
2.818801e-01	9.741897\\
2.679310e-01	23.002935\\
2.691679e-01	14.466516\\
2.465479e-01	3.729205\\
1.610513e-01	2.851811\\
2.918382e-01	8.822935\\
2.501943e-01	14.372696\\
2.751060e-01	6.752618\\
2.595210e-01	14.936431\\
2.650728e-01	13.335495\\
2.948151e-01	12.048157\\
2.942009e-01	13.333979\\
2.636111e-01	7.944017\\
3.019681e-01	16.558411\\
2.918742e-01	8.763159\\
};\addlegendentry{Benchmark: Mat. Decomp.}

\end{axis}
\end{tikzpicture}

%% file: figs/fig_WSRvsL.tex
\renewcommand{\sfdefault}{cmbr} 
\newcommand{\ttk}[1]{\footnotesize\textsf{#1}} 

\pgfplotsset{
    every axis/.append style={
        label style={font=\sffamily\footnotesize},      
        legend style={font=\sffamily\footnotesize},     
        tick label style={font=\sffamily\footnotesize}  
    }
}

\begin{tikzpicture}
\begin{axis}[
    width=2.97in, height=2.4in, scale only axis,
    xlabel={Number of unfolding layers},
    ylabel={Achieved WSR [bps/Hz]},
    xmin=1.5, xmax=7.5,
    ymin=0, ymax=47.7,
    xmajorgrids=true,
    ymajorgrids=true,
    grid style={dashed, line width=0.5pt, draw=black!12},
    axis line style={line width=0.7pt, draw=black!40},
    axis background/.style={fill=white},
    xtick={2,3,4,5,6,7},
    ytick={0,10,20,30,40},
    xticklabels={$L=2$,$L=3$,$L=4$,$L=5$,$L=6$,$L=7$},
    legend style={at={(0,1)}, anchor=north west, legend cell align=left, align=left, fill=white, draw=black!40,
        line width=0.7pt, font=\sffamily\scriptsize},
    legend image post style={mark=*, mark options={solid, fill=black}}
]

\addplot[
    area legend,
    draw=black,
    fill=black!10
] coordinates {(0,0)};
\addlegendentry{Min--Max Range}

\foreach \x/\min/\max in {
2/1.120569/30.405960,
3/1.197920/39.770519,
4/1.336551/42.601280,
5/1.403853/42.536961,
6/1.414448/43.771473,
7/1.420401/43.549458
}{
    \addplot[
        draw=black,
        fill=black!10,
        line width=0.8pt
    ] coordinates {
        (\x-0.15,\min)
        (\x-0.15,\max)
        (\x+0.15,\max)
        (\x+0.15,\min)
        (\x-0.15,\min)
    };
}
\addlegendimage{
    legend image code/.code={
        \draw[draw=black, fill=black!10] (0cm,-0.1cm) rectangle (0.3cm,0.15cm);
    }
}

\addplot[
    color=black,
    mark=*,
    mark size=2pt,
    line width=1.1pt,
    mark options={solid}
]
table[
    row sep=\\,
    x=x,
    y=mean
]{
x mean min max \\
2 14.945246 1.120569 30.405960 \\
3 17.293691 1.197920 39.770519 \\
4 18.346993 1.336551 42.601280 \\
5 18.967807 1.403853 42.536961 \\
6 19.133330 1.414448 43.771473 \\
7 19.235471 1.420401 43.549458 \\
};\addlegendentry{Mean}

\end{axis}
\end{tikzpicture}

%% file: figs/fig_WSRvsNMSE.tex
\renewcommand{\sfdefault}{cmbr} 
\newcommand{\ttk}[1]{\footnotesize\textsf{#1}} 

\pgfplotsset{
    every axis/.append style={
        label style={font=\sffamily\footnotesize},      
        legend style={font=\sffamily\footnotesize},     
        tick label style={font=\sffamily\footnotesize}  
    }
}

\begin{tikzpicture}
\begin{axis}[%
    width=2.97in, height=2.4in, scale only axis,
    xmin=0.000085, xmax=0.58, ymin=4.6, ymax=25,
    xmode=log,
    xlabel={CSI NMSE}, 
    ylabel={Achieved WSR [bps/Hz]},
    ytick={0,5,10,15,20,25},
    yticklabels={\ttk{0},\ttk{5},\ttk{10},\ttk{15},\ttk{20},\ttk{25},\ttk{30},\ttk{35},\ttk{40}},
    axis line style={line width=0.7pt, draw=black!40},
    axis background/.style={fill=white},
    xmajorgrids=true,
    ymajorgrids=true,
    grid style={dashed, line width=0.5pt, draw=black!12},
    legend style={
        at={(1,1)}, anchor=north east,
        legend cell align=left,  
        align=left,
        fill=white,
        draw=black!40,
        line width=0.7pt,
        font=\sffamily\scriptsize}
]

\addplot[black,
    dashed,
    mark=pentagon,
    mark size=2.7pt,
    line width=1.2pt,
    mark options={solid, fill=black}
    ]table[row sep=crcr]{%
1.000000e-04	16.838485\\
5.492803e-04	16.825906\\
3.017088e-03	16.806702\\
1.657227e-02	16.654113\\
9.102821e-02	15.522062\\
5.000000e-01	12.179310\\
};
\addlegendentry{Alg.~\ref{ModelAlgo} ($L$=200)}

\addplot[
    black,
    dashed,
    mark=square,
    mark size=2pt,
    line width=1.2pt,
    mark options={solid, fill=black}
] table[row sep=crcr]{%
1.000000e-04	14.696149\\
5.492803e-04	14.684990\\
3.017088e-03	14.691590\\
1.657227e-02	14.689816\\
9.102821e-02	14.070085\\
5.000000e-01	11.352784\\
};
\addlegendentry{Alg.~\ref{ModelAlgo} ($L$=50)}

\addplot[color=taborange, mark=triangle*, mark size=2.2pt, line width=1.2pt, mark options={solid, fill=taborange}] table[row sep=crcr]{%
1.000000e-04	16.298246\\
5.492803e-04	16.271884\\
3.017088e-03	16.025679\\
1.657227e-02	14.912578\\
9.102821e-02	12.406257\\
5.000000e-01	9.123111\\
};
\addlegendentry{Unfolded Alg.~\ref{ModelAlgo} ($L$=4)}

\addplot[color=tabblue, mark=*, mark size=2pt, line width=1.2pt, mark options={solid, fill=tabblue}] table[row sep=crcr]{%
1.000000e-04	13.190988\\
5.492803e-04	13.164840\\
3.017088e-03	13.062592\\
1.657227e-02	12.512697\\
9.102821e-02	10.986320\\
5.000000e-01	8.576329\\
};
\addlegendentry{Unfolded Alg.~\ref{ModelAlgo} ($L$=2)}

\addplot[name path=Unfolded_Iter2_upper, draw=none] table[row sep=crcr]{%
1.000000e-04	19.467990\\
5.492803e-04	19.416836\\
3.017088e-03	19.206744\\
1.657227e-02	18.197700\\
9.102821e-02	15.728093\\
5.000000e-01	12.286233\\
};
\addplot[name path=Unfolded_Iter2_lower, draw=none] table[row sep=crcr]{%
1.000000e-04	6.913985\\
5.492803e-04	6.912845\\
3.017088e-03	6.918440\\
1.657227e-02	6.827694\\
9.102821e-02	6.244546\\
5.000000e-01	4.866425\\
};
\addplot[tabblue!30, opacity=0.5] fill between[of=Unfolded_Iter2_upper and Unfolded_Iter2_lower];

\addplot[name path=Unfolded_Iter4_upper, draw=none] table[row sep=crcr]{%
1.000000e-04	24.473853\\
5.492803e-04	24.417251\\
3.017088e-03	23.892024\\
1.657227e-02	21.838721\\
9.102821e-02	17.673588\\
5.000000e-01	12.871678\\
};
\addplot[name path=Unfolded_Iter4_lower, draw=none] table[row sep=crcr]{%
1.000000e-04	8.122638\\
5.492803e-04	8.126517\\
3.017088e-03	8.159334\\
1.657227e-02	7.986435\\
9.102821e-02	7.138926\\
5.000000e-01	5.374545\\
};

\addplot[taborange!30, opacity=0.5] fill between[of=Unfolded_Iter4_upper and Unfolded_Iter4_lower];

\end{axis}
\end{tikzpicture}

%% file: figs/fig_Generalization.tex
\renewcommand{\sfdefault}{cmbr} 
\newcommand{\ttk}[1]{\footnotesize\textsf{#1}} 

\pgfplotsset{
    every axis/.append style={
        label style={font=\sffamily\footnotesize},      
        legend style={font=\sffamily\footnotesize},     
        tick label style={font=\sffamily\footnotesize}  
    }
}

\begin{tikzpicture}
\begin{axis}[
	width=2.97in, height=2in, scale only axis,
    symbolic x coords={ Miami, Austin, Columbus, Dallas },
    xtick=data,
    ylabel={Average WSR [bit/s/Hz]},
    ymin=0, ymax=22,
    grid style={dashed, line width=0.5pt, draw=black!12},
    axis line style={line width=0.7pt, draw=black!40},
    axis background/.style={fill=white},
    xmajorgrids=true,
    ymajorgrids=true,
    ybar,
    bar width=4pt,
    enlarge x limits=0.2,   
    x tick label style={rotate=0, anchor=center, yshift=-4pt},
    legend style={at={(0.5,1.05)}, anchor=south,  legend cell align=left, align=left, legend columns=3, font=\sffamily\tiny, fill=white, draw=black!40,
        line width=0.7pt},
    ymajorgrids,
]



\addplot[bar shift=-15pt, fill=tabblue!75, draw=tabblue!90!black]
coordinates {(Miami, 14.722206)
(Austin, 15.955746)
(Columbus, 11.943252)
(Dallas, 13.611118)};
\addlegendentry{Unfolded Alg.~\ref{ModelAlgo} ($L$=2)}

\addplot[bar shift=-9pt, fill=taborange!75, draw=taborange!90!black]
coordinates {(Miami, 18.298977)
(Austin, 20.126167)
(Columbus, 14.620319)
(Dallas, 16.246046)};
\addlegendentry{Unfolded Alg.~\ref{ModelAlgo} ($L$=4)}

\addplot[bar shift=-3pt, fill=tabgreen!75, draw=tabgreen!90!black]
coordinates {(Miami, 18.970823)
(Austin, 21.327838)
(Columbus, 15.167595)
(Dallas, 16.709577)};
\addlegendentry{Unfolded Alg.~\ref{ModelAlgo} ($L$=6)}


\addplot[bar shift=3pt, fill=black!65, draw=black]
coordinates {(Miami, 17.335399)
(Austin, 21.363023)
(Columbus, 13.007874)
(Dallas, 16.528502)};
\addlegendentry{Alg.~\ref{ModelAlgo} ($L$=200)}


\addplot[bar shift=9pt, fill=white, draw=black, pattern=north east lines]
coordinates {(Miami, 13.550730)
(Austin, 15.077205)
(Columbus, 10.308733)
(Dallas, 11.760599)};
\addlegendentry{Benchmark: AO-MO}

\addplot[bar shift=15pt, fill=white, draw=black, pattern=north west lines]
coordinates {(Miami, 10.978122)
(Austin, 11.367092)
(Columbus, 9.253147)
(Dallas, 8.572376)};
\addlegendentry{Benchmark: Mat. Decomp.}

\end{axis}
\end{tikzpicture}